\documentclass[a4paper,11pt]{article}

\usepackage[english]{babel}
\usepackage{graphicx,charter}
\usepackage{framed}
\usepackage[normalem]{ulem}
\usepackage{amsmath}
\usepackage[ruled,vlined,linesnumbered]{algorithm2e}
\usepackage{amsthm,thm-restate,thmtools}
\usepackage{amssymb}
\usepackage{amsfonts}
\usepackage{enumerate}
\usepackage{etoolbox}
\usepackage{subcaption}
\usepackage[utf8]{inputenc}
\usepackage{fullpage}
\usepackage{xcolor,xfrac}
\usepackage{url}
\usepackage{enumitem}
\usepackage{tabularx}
\usepackage{algpseudocode}
\algdef{SE}[PROCEDURE]{Procedure}{EndProcedure}%
   [2]{\algorithmicprocedure\ \textproc{#1}\ifthenelse{\equal{#2}{}}{}{(#2)}}%
   {\algorithmicend\ \algorithmicprocedure}%


\usepackage{soul}
\usepackage{url}
\usepackage[hidelinks]{hyperref}

\usepackage{comment}

\newcommand{\A}{\mathcal{A}}
\newcommand{\G}{\mathcal{G}}

\DeclareMathOperator*{\argmax}{arg\,max}
\DeclareMathOperator*{\argmin}{arg\,min}

\renewcommand{\emptyset}{\varnothing}

\newtheorem{theorem}{Theorem}
\newtheorem{lemma}[theorem]{Lemma}
\newtheorem{corollary}[theorem]{Corollary}
\newtheorem{definition}{Definition}

\newcommand{\EFone}{$\mathrm{EF1}$}
\newcommand{\EFX}{$\mathrm{EFX}$}
\newcommand{\swapEF}{$\mathrm{swapEF}$}

\usepackage{xcolor}

\usepackage{natbib}
\date{}
\title{Repeatedly Matching Items to Agents Fairly and Efficiently}
\author{Ioannis Caragiannis\thanks{Department of Computer Science, Aarhus University, {\AA}bogade 34, 8200 Aarhus N, Denmark. Partially supported by the Independent Research Fund Denmark (DFF) under grant 2032-00185B. Email:~\url{iannis@cs.au.dk}}\and Shivika Narang\thanks{Department of Computer Science and Automation, Indian Institute of Science, Bengaluru, Karnataka, India. Supported by a Tata Consultancy Services research scholarship.  Email:~\url{shivika@iisc.ac.in}}}

\sloppy

\begin{document}
\maketitle

\begin{abstract}
\noindent We consider a novel setting where a set of items are matched to the same set of agents repeatedly over multiple rounds. Each agent gets exactly one item per round, which brings interesting challenges to finding efficient and/or fair {\em repeated matchings}. A particular feature of our model is that the value of an agent for an item in some round depends on the number of rounds in which the item has been used by the agent in the past. We present a set of positive and negative results about the efficiency and fairness of repeated matchings. For example, when items are goods, a variation of the well-studied fairness notion of envy-freeness up to one good (\EFone) can be satisfied under certain conditions. Furthermore, it is intractable to achieve fairness and (approximate) efficiency simultaneously, even though they are achievable separately. For mixed items, which can be goods for some agents and chores for others, we propose and study a new notion of fairness that we call {\em swap envy-freeness} (\swapEF). 
\end{abstract}

\section{Introduction}
The problem of fairly dividing indivisible items among agents has received enormous attention by the EconCS research community in the recent years. The standard setting involves a set of items and agents who have values for them. The objective is to compute an allocation which gives each item to a single agent so that some notion of fairness  is satisfied. A diverse set of fairness objectives has been explored in the past; some of the most well known of these are envy-freeness and its relaxations. Prior work has typically explored various settings where agents' allocations do not change with time. 

However, in some scenarios that arise in practice, the same set of items must be allocated to the same set of agents repeatedly.  More crucially, another feature that distinguishes such scenarios from the standard setting is that the value of an agent for an item changes over time and typically depends on how many times the agent has received the item in the past. This can make solutions that were fair when the agents were allocated the items once, no longer fair. 

To give an example, consider different research labs that all need access to several expensive research facilities in a university. How should the access of the labs to the facilities be fairly coordinated/scheduled throughout the year? This is a fair division problem with the labs and the facilities playing the role of the agents and the items, respectively. To be fair among labs and efficient overall, such a scheduling should take into account the values the  labs have for facilities, which typically change over time. For instance, during the first few weeks of access to a facility, the researchers in a lab may need time to learn how to operate it. During that time, the value the lab gets by accessing a facility can be very low, even negative. As the researchers gain more experience, their research output increases, and so does the lab's value for the facility. Once the researchers have run their intended experiments, the lab's value for the facility decreases again until the next experiment. 
We model such situations with $n$ agents who must be matched with exactly one of $n$ items in each of $T$ rounds, repeatedly. The value an agent has for an item in a round depends on how many times the agent has used the item in previous rounds. We use {\em social welfare} (the total value of the agents from the items they get in all rounds) to assess the efficiency of such {\em repeated matchings}. We also use  relaxations of {\em envy-freeness} as fairness concepts. In particular, we adapt the well-known {\em envy-freeness up to one item} (EF1) and use it when all valuations are non-negative (i.e., when items are {\em goods}). A repeated matching is \EFone\ if the value of every agent $i$ for her bundle is at least as high as her value for the bundle of any other agent $j$ after removing the last {\em copy} of an item from $j$'s bundle. We observe that \EFone\ is not suitable when valuations can be positive or negative (i.e., when items are {\em mixed}), and introduce the new notion of {\em swap envy-freeness} to assess fairness of repeated matchings for mixed items. 

\subsection{Our Contribution} 
More specifically, our technical contribution is as follows. We prove that the problem of computing a repeated matching with maximum social welfare is NP-hard, even when $T=3$. Our hardness reduction defines instances with items and non-monotone valuations. 
The problem becomes solvable in polynomial time when the valuations are monotone. This is when the value an agent has for an item can only decrease or increase, but not both, in terms of the number of rounds the agent had the item in the past.  For the case of monotone non-increasing valuations, earlier work on $b$-matchings can be leveraged to find the optimal solution. When the valuations are monotone non-decreasing, we find a neat reduction  to the case of time-constant valuations which can be solved efficiently. 

We also consider fair repeated matchings, using \EFone\ as fairness concept. We find that under identical valuations, \EFone\ repeated matchings always exist and can be found in polynomial time. Furthermore, we show that any instance with general valuations and $T\bmod n\in\{0,1,2,n-1\}$ (i.e., including all instances with at most four agents/items) has an \EFone\ repeated matching, which can be computed efficiently. We establish that, unfortunately, \EFone\ is not compatible with social welfare maximization and even approximating the maximum social welfare over \EFone\ repeated matchings is NP-hard. This holds even for settings where \EFone\ solutions can be found in polynomial time. 

Moreover, at a conceptual level, we propose and study a new fairness notion called swap envy-freeness (\swapEF). Here, we find that under identical valuations, \swapEF\ repeated matchings can be found using the same algorithm as used for \EFone. Furthermore, we show that \swapEF\ repeated matchings always exist and can be computed efficiently on instances with $T\bmod n\in\{0,1,2,n-2,n-1\}$ (i.e., including all instances with at most five agents/items). Our hardness results are proved on instances with goods. Our positive results besides those for \EFone, apply to instances with mixed items. 

\subsection{Related Work} In fair division with indivisible items, \EFone\ has been established as a key fairness concept. It was defined by~\cite{budish2011combinatorial} (and, implicitly, a few years earlier by~\citealp{lipton2004approximately}). In contrast to envy-freeness which is usually impossible to achieve, \EFone\ is always achievable in the standard setting and is also compatible with notions of economic efficiency~\citep{caragiannis2019unreasonable,BKV18}. These papers assume that items are goods, i.e., agents have non-negative valuations for them. Non-positive valuations, i.e., indivisible chores, have also received attention. More importantly, a series of recent papers consider mixed items that can be goods for some agents and chores for others~\citep{ACIW22,BSV21,BBB+20}.

The main assumption in the standard setting is that each item is given to exactly one agent with no explicit cap on the number of items one agent can get. 
\citet{biswas2018fair} consider an extension where the items are partitioned into categories and there are cardinality constraints on how many items an agent can be allocated from each category. They show how to compute an \EFone\ allocation by extending the envy-cycle elimination algorithm of~\citet{lipton2004approximately}. Even though cardinality constraints can restrict allocations to repeated matchings, our history-dependent valuations cannot be expressed by their model. Another extension is considered by \cite{gafni2021unified} where each item may have multiple copies. They study relaxations of envy-freeness with mixed items, in a model where each agent can get at most one item copy.

The concept of repeated matching has been considered before, actually using \EFone\ as fairness concept. \citet{HLC15} look at a dynamic one-sided repeated matching model with ordinal preferences that change over time. They study strategyproofness and give a mechanism that is \EFone. As the model of preferences studied is entirely different, their results are not applicable to our model. \citet{gollapudi2020almost} study a two-sided repeated matching setting where the agent values may change in each round, but do not take into account how often the two agents have been matched in the past. In addition, due to the two-sided nature of their setting, their results are not applicable to our case. 

Finally, relaxations of envy-freeness have been considered extensively in the literature. For mixed items in particular, \cite{A20} summarizes the several variations of \EFone\ that have been proposed in the literature and proposes new ones. The setting of identical valuations has also been specifically explored, both for the existence of approximately envy-free solutions \citep{plaut2020almost,chen2020fairness} and other objectives \citep{barman2018greedy,BarmanS20}. To the best of our knowledge, swap envy-freeness appears to be novel.

\subsection{Roadmap} The rest of the paper is structured as follows. We begin with setting up the notation and relevant definitions in Section \ref{sec:prelim}. Section \ref{sec:max-sw} focuses on maximizing social welfare. Here, we give our hardness result for maximizing social welfare in general and polynomial-time algorithms for monotone valuations. In Section \ref{sec:fairness}, we explore settings under which we can satisfy \EFone\ and algorithms that find \EFone\ solutions. In Section \ref{sec:swef1}, we find that even in settings where \EFone\ repeated matchings can be found in polynomial time, maximizing social welfare over the space of \EFone\ repeated matchings is intractable. We devote Section~\ref{sec:swapEF} to the study of swap envy-freeness. We conclude with a discussion on open problems in Section~\ref{sec:open}.

\section{Notation and Preliminaries}\label{sec:prelim}
Our setting involves a set $\A$ of $n$ agents and a set $\G$ of $n$ items. We use the term {\em matching} to refer to an allocation of the items to the agents, so that each agent gets exactly one item and each item is given to exactly one agent. We particularly focus on {\em repeated matchings}, where the items are matched to the agents in multiple rounds. More formally, we consider instances of the form $I=\langle \A,\G,\{v_i\}_{i\in \A},T\rangle$, where $T$ denotes the number of rounds and, for each agent $i\in \A$, $v_i$ is a function from $\G\times [T]$ to $\mathbb{R}$, where $v_i(g,t)$ denotes the {\em valuation} of agent $i$ for item $g$ when it is matched to the item for the $t^{\text{th}}$ time. A repeated matching $A=(A^1, ..., A^T)$ is simply a collection of matchings, with one matching $A^t$ per each round $t\in [T]$. Furthermore, we denote by $A_i$ the multiset (or {\em bundle}) which contains copies of the items to which agent $i\in\A$ is matched in the $T$ rounds.

Hence, defining the bundles $A_i$ for $i\in \A$ given the repeated matching $A$ is trivial. The opposite task is also straightforward. Let $N(B,g)$ be the multiplicity of item $g$ in bundle $B$. Given bundles of items $A_i$ for $i\in \A$ with $|A_i|=T$ (i.e., each agent gets $T$ copies of items) and $\sum_{i\in \A}{N(A_i,g)}=T$ (i.e., $T$ copies of each item $g$ are allocated), a consistent repeated matching\footnote{We remark that this repeated matching is not unique. However, this does not affect the values of each agent for her bundle and the bundle of any other agent, which are the same in all different consistent repeated matchings.} for instance $I$ is obtained as follows. We construct the bipartite multigraph $G=(\A,\G,E)$ so that the set of edges $E$ consists of (a copy of) edge $(i,g)$ for every (copy of) item $g$ such that $g\in A_i$. The graph $G$ is $T$-regular and, thus, by Hall's matching theorem (see~\citealp{PL86}), can be decomposed into $T$ matchings of edges $M_1$, ..., $M_T$. These matchings correspond to a repeated matching by interpreting the edge $(i,g)$ in matching $M_t$ as the assignment of item $g$ to agent $i$ in the $t^\text{th}$ round. 

With a slight abuse of notation, we use $v_i(B)$ to denote the {\em value} agent $i\in \A$ has when she gets the bundle $B$, i.e., 
\[v_i(B)=\sum_{g\in \G}\sum_{t=1}^{N(B,g)}v_i(g,t).\]
Hence, for a repeated matching $A$, $v_i(A_i)$ is the total value from each item copy agent $i$ receives in all rounds. The {\em social welfare} of $A$ is simply the sum of the agents' values for their bundle, i.e., $SW(A)=\sum_{i\in \A}v_i(A_i)$.

We shall look at specific types of valuations under which we will try to find efficient and/or fair repeated matchings. A well-motivated setting is that of {\em identical} valuations where $v_1=v_2=\cdots =v_n$. This assumption proves particularly useful in finding fair solutions. Another important class of valuation functions is that of {\em monotone} valuations.

\begin{definition}[monotone valuations]
The valuation function $v_i$ is monotone non-increasing (respectively, monotone non-decreasing) if for every item $g\in \G$, and $t\in [T-1]$, we have that $v_i(g,t)\geq v_i(g,t+1)$ (respectively, $v_i(g,t)\leq v_i(g,t+1)$).
\end{definition}

\noindent These two classes of valuation functions intersect in the class of {\em constant} valuations.

\begin{definition}[constant valuations]
Valuation function $v_i$ is said to be constant if for every item $g\in \G$,  we have that $v_i(g,1)=v_i(g,2)=\cdots =v_i(g,T)=v_i(g)$.
\end{definition}

We extend to repeated matchings the well-known fairness notion of {\em envy-freeness of up to one item} (EF1) as follows. 

\begin{definition}[\EFone]
A repeated matching $A$ is \EFone\ if for every pair of agents $i,j \in \A$, there exists an item $g \in \G$ such that $v_i(A_i)\geq v_i(A_j\setminus \{g\})$.
\end{definition}
\noindent We remark that the operation $A_j\setminus \{g\}$ removes one copy of item $g$ from the bundle $A_j$ if $g$ belongs to $A_j$ and leaves $A_j$ intact otherwise.

We refer to the items as {\em goods} on instances where all valuations are non-negative, i.e., when $v_i(g,t)\geq 0$ for every $i\in \A$, $g\in\G$, and $t\in [T]$. When there are no restrictions on the valuations, we refer to the items as {\em mixed}. 

\section{Maximizing Social Welfare}\label{sec:max-sw}
We begin our technical exposition by studying the complexity of the problem of computing a repeated matching of maximum social welfare. Notice that if $T=1$, this task can be easily done by computing a maximum-weight perfect matching in the complete bipartite graph $G=(\A,\G,\A\times\G)$, in which edge $(i,g)$ has weight $v_i(g,1)$. For $T>1$, an approach that seems natural computes gradually a maximum-weight perfect matching for each round, taking into account the matching decisions in previous rounds. 

For example, consider the instance with three agents and two rounds (i.e., $n=3$, $\A=\{1,2,3\}$, $\G=\{g_1,g_2,g_3\}$, and $T=2$). The agent valuations are as follows: $v_1(g_2,1)=v_1(g_3,1)=1-\epsilon$ (for small but strictly positive $\epsilon$), $v_2(g_2,1)=v_3(g_3,1)=1$, while all other valuations are $0$. A maximum-weight perfect matching on the complete bipartite graph $G=(\A,\G,\A\times\G)$ with weight $v_i(g,1)$ on edge $(i,g)$ assigns item $g_i$ to agent $i$ in the first round; this gives value $1$ to agents $2$ and $3$. Then, the natural way to compute the matching of the second round is to compute a maximum-weight perfect matching in the complete bipartite graph $G=(\A,\G,\A\times\G)$ with weight $v_i(g_i,2)$ to edge $(i,g_i)$ (because agent $i$ already uses item $i$ in the first round) and weight $v_i(g,1)$ to edge $(i,g)$ for $g\not=g_i$. In this way, the matching of the second round will give value of $1-\epsilon$ to agent $1$ only, by matching her to either item $g_2$ or item $g_3$. Thus, the social welfare is $3-\epsilon$. In contrast, consider the repeated matching in which the first-round matching assigns item $g_2$ to agent $1$, item $g_1$ to agent $2$, and item $g_3$ to agent $3$, and the second-round matching assigns item $g_3$ to agent $1$, item $g_2$ to agent $2$, and item $g_1$ to agent $3$. Agent $1$ gets value $1-\epsilon$ in both rounds, agent $2$ gets value $1$ in the second round, and agent $3$ gets value $1$ in the first round. Hence, the social welfare is now $4-2\epsilon$, i.e., higher than before.

This example demonstrates that computing a repeated matching of maximum social welfare can be a challenging task. Actually, as our first result indicates, the problem is hard.

\begin{theorem}\label{thm:sw-hard}
Given a repeated matching instance, computing a repeated matching of maximum social welfare is NP-hard.
\end{theorem}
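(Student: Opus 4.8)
The plan is to first reduce the problem to a purely combinatorial optimization over integer matrices, and then to encode an NP-hard problem into that structure using the freedom afforded by non-monotone valuations.

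First I would observe that the social welfare of a repeated matching $A$ depends only on the multiplicities $x_{ig}=N(A_i,g)$ and not on the round-by-round assignment: writing $V_i(g,k)=\sum_{t=1}^{k}v_i(g,t)$ for the cumulative value of receiving item $g$ exactly $k$ times, we have $SW(A)=\sum_{i\in\A}\sum_{g\in\G}V_i(g,x_{ig})$. By the characterization in Section~\ref{sec:prelim}, every nonnegative integer matrix $X=(x_{ig})$ whose row sums and column sums all equal $T$ is the multiplicity matrix of some consistent repeated matching, obtained from the Hall decomposition of the induced $T$-regular bipartite multigraph. Hence maximizing social welfare is \emph{exactly} the problem of maximizing the separable objective $\sum_{i,g}V_i(g,x_{ig})$ over all such $T$-regular matrices. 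Crucially, since the valuations are unrestricted, the increments $v_i(g,t)=V_i(g,t)-V_i(g,t-1)$ may be chosen arbitrarily, so for $T=3$ I can prescribe the three quantities $V_i(g,1),V_i(g,2),V_i(g,3)$ independently for each cell. It is precisely this freedom to make the increments non-monotone—so that $V_i(g,\cdot)$ is neither concave nor convex—that I expect to drive the hardness, consistent with the fact that both monotone cases admit polynomial algorithms.

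Next I would fix $T=3$ and reduce from a problem with a natural triple structure, such as $3$-Dimensional Matching. The idea is to use two kinds of cells. \emph{All-or-nothing} cells assign positive value only to multiplicities $0$ or $3$, with a large penalty at the intermediate multiplicities $1$ and $2$; this forces the optimal matrix to route item copies in blocks of three and thereby simulates the selection of pairwise-disjoint triples. \emph{Coverage} cells then reward covering each ground element exactly once, while dedicated dummy agents and dummy items absorb the remaining copies so that every row and every column sums to exactly $3$. I would calibrate the value gaps so that a valid $3$-dimensional matching corresponds to a feasible matrix of total value at least a threshold $W$, and, conversely, so that any matrix of value at least $W$ must be \emph{clean}—each active cell sitting at a rewarded multiplicity—and hence decodes back to a valid matching of the original instance.

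The main obstacle, I expect, lies in the gadget design under the rigid $3$-regularity constraint: unlike a free assignment, the multiplicities in each row and column are globally coupled and must sum to exactly $3$, so I cannot set cells independently and must guarantee that the leftover copies forced by this constraint can always be absorbed by dummies without generating spurious profit. Establishing the converse direction—that no matrix exploiting intermediate multiplicities can beat the threshold $W$—will require choosing the penalties large enough relative to the coverage rewards and carefully tracking how the regularity constraint can redistribute copies among cells; this accounting is where I anticipate the bulk of the technical effort, and it is also where the non-monotonicity of the valuations is indispensable.
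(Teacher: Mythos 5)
Your overall strategy is sound and in fact close in spirit to the paper's own proof: the paper also fixes $T=3$, works with the multiplicity (bundle) view of a repeated matching, and exploits non-monotonicity through ``all three copies to one agent'' gadgets; it reduces from exact $3$-cover rather than $3$-dimensional matching, but that difference is immaterial. The problem is that what you have written is a plan rather than a proof, and the steps you defer are exactly where the content of the theorem lies. You never specify the valuations, the numbers of dummy agents and items, or the threshold $W$, and you explicitly leave open both the design of the ``coverage'' gadget and the converse direction (that no solution exploiting intermediate multiplicities can reach $W$). Moreover, your instinct that the difficulty sits in penalizing intermediate multiplicities and in controlling how the rigid $3$-regularity constraint redistributes copies is somewhat misplaced: no penalties are needed at all, and the real crux is enforcing that each ground element is covered \emph{exactly once} by a selected triple, which your sketch does not address.

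The paper resolves precisely that point with a concrete calibration you are missing: each element item is valued $1$ at its first copy by the set agents whose set contains it and $3$ at its second copy by a dedicated element agent, so its maximum possible contribution to the welfare is $4$, attained only when two copies go to the element agent and exactly one copy goes to a covering set agent; each space-filling item is valued $3$ only at its third copy, so its maximum contribution is $3$, attained only when all three copies go to a single set agent. Summing these per-item upper bounds gives the threshold $3m+9q$, and the converse direction reduces to a short per-item accounting argument; dummies cannot generate spurious profit because every remaining valuation is $0$, so no global analysis of how the regularity constraint shuffles copies is required. Until you commit to explicit values of this kind and verify both directions against an explicit threshold, the reduction is not established; as it stands, your argument has a genuine gap at the gadget-construction and converse steps that you yourself flag as open.
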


\begin{proof}
We present a polynomial-time reduction from exact $3$-cover (X3C). An instance of X3C consists of a universe $U=[3q]$ of elements and a collection $\mathcal{S}$ of $m$ sets $S_1, S_2, ..., S_m$, containing three elements of $U$ each. Deciding whether there are $q$ mutually disjoint sets in $\mathcal{S}$ is a well-known NP-hard problem; e.g., see~\citet{GJ79}.

We construct an instance $I=\langle \A,\G, \{v_i\}_{i\in \A},T\rangle$ with $T=3$, and $n=m+3q$ agents/items. The set of agents $\A$ has a {\em set agent} $i$ for every $i\in [m]$ and an {\em element agent} $m+i$ for every element $i$ of $U$. The set of items $\G$ has an {\em element item} for every element $i\in [U]$, $m-q$ {\em space-filling} items $3q+1, ..., m+2q$, and $q$ {\em dummy items} $m+2q+1, ..., m+3q$. The valuations are as follows:
\begin{itemize}
    \item For every $i\in [m]$ and every element $g\in S_i$, the set agent $i$ has value $v_i(g,1)=1$ for the first copy of element item $g$.
    \item For every $i\in [m]$ and every $g=3q+1, ..., m+2q$, the set agent $i$ has value $v_i(g,3)=3$ for the third copy of the space-filling item $g$.
    \item For every $i\in [3q]$, the element agent $m+i$ has value $v_{m+i}(i,2)=3$ for the second copy of the element item $i$.
    \item All other valuations are $0$ (including the valuation of any  agent for a dummy item).
\end{itemize}

We claim that there are $q$ mutually disjoint sets in $\mathcal{S}$ if and only if there is a repeated matching of social welfare $3m+9q$ in $I$. We begin by presenting a repeated matching of social welfare $3m+9q$ when $\mathcal{S}$ has a subcollection $X$ of $q$ disjoint sets. For every $i$ such that $S_i\in X$, the set agent $i$ gets one copy of each element item corresponding to an element $g\in S_i$. Agent $i$ gets a value of $3$ in this way. For every $i$ such that $S_i\not\in X$ (notice that there are exactly $m-q$ such $i$'s), the set agent $i$ gets three copies of a distinct space-filling item. Hence, the set agents who do not get element items have value $3$, too. For $i\in [3q]$, the element agent $m+i$ gets two copies of the element item $i$. Again, the element agents have all value $3$. Each of the $3q$ element agents gets one distinct copy of one of the $q$ dummy items; these do not contribute to the social welfare.

Now consider a repeated matching on instance $I$ that has social welfare $3m+9q$. This means that each of the $m-q$ space-filling items gives value $3$ to the agents while each of the $3q$ element items gives them value $4$. Notice that these are the maximum contributions from each item to the social welfare. The only way that each space-filling item gives a value of $3$ to the agents is when all its three copies are given to the same set agent. Hence, $m-q$ of the set agents have three copies of a space-filling item each. Also, the only way for an element item $g$ to give value $4$ to the agents is when two of its copies are given to the element agent $m+g$ and another copy is given to a set agent $i$ such that $g\in S_i$. Hence, every set agent $i$ who does not include a space-filling item contains a single copy of each of the three items corresponding to the elements in $S_i$ which is not used in any other set agent. Hence, the union of the $q$ sets corresponding to these set agents includes all elements of $U$.
\end{proof}

\subsection{Monotone Valuations}\label{subsec:welfare-non-increasing}
Fortunately, the problem can be solved in polynomial time for monotone valuations, even when the items are mixed. Notice that the instance in the example given at the beginning of Section~\ref{sec:max-sw} belongs to the category of monotone non-increasing valuations. 

\paragraph{Monotone non-increasing valuations.}
For this particular case, well-known results on $b$-matchings can be used to find a social welfare maximizing repeated matching. In the following, we briefly explain how; recall that a $b$-matching in a bipartite graph is just a subset of the edges that includes at most $b$ edges that are incident to any given node. \citet{GT89} show how to compute a maximum-weight $b$-matching on input an edge-weighted bipartite multigraph in time that is polynomial in $b$, the size of the graph, and the number of bits required to represent the edge-weights.

Given a repeated matching instance $I=\langle \A, \G, \{v_i\}_{i\in \A},T \rangle$ where each $\{v_i\}_{i\in \A}$ is monotone non-increasing, construct the bipartite multigraph graph $G=(\A,\G, E)$ where $E$ consists of $T$ copies of edge $(i,g)$ for each $i\in\A$ and each $g\in\G$. For each $t\in [T]$, $i\in \A$ and $g\in \G$, we set the edge weight of the $t^{\text{th}}$ copy of edge $(i,g)$ to $v_i(g,t)$. Now, since the $v_i$s are monotone non-increasing, we can assume that a maximum-weight $T$-matching in $G$ has the following {\em consecutive edge copies} property: if it contains $k$ copies of an edge $(i,g)$, these are the first $k$ copies of weights $v_i(g,1)$, ..., $v_i(g,k)$. Notice that, if this is not the case, we can redistribute the edge copies of $(i,g)$ between agents appropriately without violating weight maximality. 

Now, a maximum-weight $T$-matching $M$ in $G$ naturally defines a repeated matching $A_M$ in $I$, where each $i$ is matched to each $g$ as many times as the number of copies of edge $(i,g)$ $M$ contains. Furthermore, the social welfare of $A_M$ is equal to the weight of $M$ and can be seen to be optimal. The reason is that any repeated matching corresponds to a $T$-matching with the consecutive edge copies property. 

In Appendix~\ref{sec:swnoninc}, we present an alternative approach for finding social welfare maximizing repeated matchings for instances with monotone non-increasing valuations. The main idea is to formulate the problem as an integer linear program and use an LP solver to compute an extreme solution of the LP relaxation, which, as we show, is guaranteed to be integral.

\paragraph{Monotone non-decreasing valuations.} Neither 
$b$-matchings nor our linear programming-based approach can be used when all the valuation functions are monotone non-decreasing. Somewhat surprisingly, it suffices to resort to an even simpler ordinary matching computation in this case. 

We remark that, on repeated matching instances with constant valuations, there is always a repeated matching of maximum social welfare in which every agent gets the same item in all rounds. To see why, consider any repeated matching $A$ and let $t$ be that round in which the total value the agents get from the items they get in matching $A^t$ is maximum. Then, the repeated matching which uses matching $A^t$ in all rounds has at least as high social welfare with $A$. Hence, a straightforward maximum-weight matching computation can be used to compute a social welfare maximizing repeated matching for instances with constant valuations. The proof of the next theorem  exploits a connection of instances with monotone non-decreasing valuations and instances with constant valuations.

\begin{theorem}\label{thm:sw-non-dec}
Given a repeated instance with monotone non-decreasing valuations, a repeated matching of maximum social welfare can be computed in polynomial time.
\end{theorem}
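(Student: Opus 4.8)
The plan is to establish a single structural fact---that among all repeated matchings there is always an optimal one in which every agent is matched to one and the same item in all $T$ rounds---and then observe that, once this is known, the problem collapses to the constant-valuation case and hence to an ordinary maximum-weight matching computation.

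First I would encode a repeated matching by its \emph{count matrix} $K=(k_{ig})_{i\in\A,\,g\in\G}$, where $k_{ig}$ is the number of rounds in which agent $i$ receives item $g$. Since the total value agent $i$ derives from $k$ copies of $g$ is always $f_{ig}(k):=\sum_{t=1}^{k}v_i(g,t)$, independently of which rounds are involved, the social welfare is the separable sum $\sum_{i\in\A}\sum_{g\in\G}f_{ig}(k_{ig})$. The feasible count matrices are exactly the nonnegative integer matrices whose row sums and column sums all equal $T$; their convex hull is the transportation polytope $T\cdot B_n$, where $B_n$ is the Birkhoff polytope of doubly stochastic matrices. By Birkhoff--von Neumann, together with the fact that scaling a polytope scales its vertices, the extreme points of $T\cdot B_n$ are precisely the integral matrices $T\cdot P$ for permutation matrices $P$, each of which corresponds to a repeated matching in which agent $i$ uses a single item in all rounds.

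The key step is the observation that each $f_{ig}$ is convex: its consecutive increments are $f_{ig}(k+1)-f_{ig}(k)=v_i(g,k+1)\ge v_i(g,k)$, precisely because the valuations are monotone non-decreasing. Extending each $f_{ig}$ to the piecewise-linear convex interpolation of its integer values turns the social-welfare objective into a convex function on $T\cdot B_n$, and a convex function on a compact convex set attains its maximum at an extreme point. Since the extreme points are the integral matrices $T\cdot P$, the maximum over the polytope is attained at a genuine repeated matching of the single-item form, and this value upper-bounds the welfare of every integral repeated matching; hence some single-item repeated matching is optimal. Alternatively, this can be argued by a direct exchange: if $K$ is not a scaled permutation, its support contains a cycle along which one may shift weight by an integer $\delta$, the welfare is convex in $\delta$ and therefore does not decrease when $\delta$ is pushed to the endpoint that zeroes out an entry, reducing the number of agents using more than one item; iterating yields a scaled permutation.

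Finally I would phrase this as the promised reduction to constant valuations by defining the constant instance $\hat I$ with $\hat v_i(g)=\tfrac1T\sum_{t=1}^{T}v_i(g,t)$. By the structural result, an optimal repeated matching for the original instance assigns each agent $i$ a single item $\sigma(i)$, earning $\sum_{t=1}^{T}v_i(\sigma(i),t)=T\,\hat v_i(\sigma(i))$, so maximizing welfare over single-item matchings coincides with maximizing $\sum_{i\in\A}\hat v_i(\sigma(i))$ over permutations $\sigma$. As already noted for constant valuations, this is solved by computing a maximum-weight perfect matching in the complete bipartite graph on $\A$ and $\G$ with edge weights $\hat v_i(g)$ and repeating it in every round, which runs in polynomial time. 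I expect the only delicate point to be the justification that concentrating each agent on a single item is without loss of optimality---the convexity/vertex argument above---while everything downstream is a routine matching computation.
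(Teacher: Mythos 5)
Your proof is correct, but it takes a genuinely different route from the paper's. You establish the structural fact directly---that under monotone non-decreasing valuations some optimal repeated matching assigns each agent a single item in all $T$ rounds---by encoding solutions as count matrices with row and column sums $T$, observing that each cumulative value $f_{ig}(k)=\sum_{t\le k}v_i(g,t)$ is discretely convex precisely because $v_i(g,\cdot)$ is non-decreasing, and then maximizing the convex (piecewise-linear interpolated) objective over the scaled Birkhoff polytope $T\cdot B_n$, whose extreme points are the scaled permutation matrices (your cycle-exchange variant gives the same conclusion without invoking convex maximization at vertices). The paper reaches the same conclusion by a relaxation/averaging argument instead: it defines constant valuations $v_i^c(g)=\frac{1}{T}\sum_{t=1}^T v_i(g,t)$, notes that monotonicity gives $v_i(A_i)\le v_i^c(A_i)$ for every bundle (the first $k$ copies are the cheapest $k$ of the $T$ values), uses the easy fact that constant-valuation instances admit an optimum repeating one matching in every round, and checks that on such a solution the original and averaged welfare coincide, so the upper bound is tight. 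Your approach buys the single-item structure as an explicit, reusable lemma, and the exchange version is elementary and self-contained, at the cost of polyhedral machinery (Birkhoff--von Neumann, integrality of the transportation polytope); the paper's argument is shorter and needs only the averaging inequality, but obtains the structural statement only implicitly. Both proofs terminate in the identical polynomial-time computation: a maximum-weight perfect matching under the averaged (equivalently, summed) weights, repeated for all $T$ rounds.
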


\begin{proof}
Consider a repeated matching instance $I=\langle \A, \G, \{v_i\}_{i\in \A}, T\rangle$ with monotone non-decreasing valuations. For each agent $i\in \A$, we construct the constant valuation function $v_i^c$ with $v_i^c(g)=\frac{1}{T}\sum_{t=1}^T{v_i(g,t)}$ for each item $g\in \G$. That is, the value that agent $i$ gets from a copy of item $g$ under valuation $v_i^c$ is the average value that $i$ gets from $g$ under $v_i$ in $T$ rounds.

Observe that, by the definition of the valuation $v_i$, it holds $v_i(A_i)\leq v_i^c(A_i)$ for any repeated matching $A$ and any agent $i\in \A$. This implies that the social welfare of $A$ under the valuations $v_i$ is not higher than the social welfare under the valuations $v_i^c$. Hence, the maximum social welfare among all repeated matchings with respect to valuations $v_i$ is not higher than the maximum social welfare among all repeated matchings with respect to valuations $v_i^c$. Furthermore, the maximum social welfare under $v_i^c$ is achieved by a repeated matching $\widehat{A}$ that uses the same matching in all rounds. The proof completes by observing that $v_i(\widehat{A}_i)= \sum_{t=1}^T{v_i(g_i,t)}=v_i^c(\widehat{A}_i)$, where $g_i$ is the item agent $i$ gets in all rounds under $\widehat{A}$. I.e., the social welfare of $\widehat{A}$ is the same with respect to the original valuations $v_i$ and the modified valuations $v_i^c$. Thus, to maximize the social welfare, it suffices to compute a single-round matching of maximum social welfare according the valuations $v_i^c$ and repeat it for $T$ rounds.
\end{proof}

\section{Computing Fair Repeated Matchings}\label{sec:fairness}
In this section, we focus on repeated matching instances with goods (i.e., non-negative valuations) and present algorithms that compute \EFone\ repeated matchings under different conditions. We begin by considering identical valuations in Section~\ref{subsec:identical} and conclude  with our results for general non-negative valuations in Section~\ref{subsec:general}. As an interlude, we discuss how previous work can be adapted to repeated matching instance with constant valuations (Section~\ref{subsec:constant}).

\subsection{Identical Valuations}\label{subsec:identical}
Our algorithm for repeated matching instances with identical valuations works as follows. It starts by assigning $\lfloor T/n\rfloor$ copies of each item to each agent. If $T\bmod n>0$ (i.e., additional copies have to be assigned to the agents so that the repeated matching is correct), the algorithm works in a round robin fashion for $T\bmod n$ phases. In these phases, it uses a fixed ranking of the items according to the value $v(g\lceil T/n\rceil)$ of their $\lceil T/n\rceil$-th copy. The ranking assigns to each item a distinct integer $\text{rank}(g)$ in $[n]$ such that $\text{rank}(g_1)<\text{rank}(g_2)$ implies that $v(g_1,\lceil T/n\rceil)\geq v(g_2,\lceil T/n\rceil)$. In each round-robin phase, the agents act according to the ordering $1, 2, ..., n$. When it is agent $i$'s turn, she picks a copy of the lowest-rank item that is available.

The algorithm appears below as Algorithm~\ref{alg:identical}. It has access to function $\text{rank}()$ defined as above and uses the matrix $f$ to store the number of copies of each item an agent gets. The final step is to call routine $\text{GenerateFromFreq}()$ to transform $f$ to the repeated matching $A$; this routine essentially implements the transformation described in Section~\ref{sec:prelim} and is called at the final step of every algorithm we present in the paper.

\begin{algorithm}[!ht]
    \KwIn{Identical Valuations Instance $I=\langle \A,\G,v,T\rangle$ with $|\A|=n$}
    \KwOut{A repeated matching $A$}
    $f(i,g)\gets \lfloor \sfrac{T}{n} \rfloor , \, \forall i\in \A,\,\forall g\in \G$\;\label{algid:L2}
    \If{$T \bmod n>0$}{
        $x_g\gets T \bmod n, \, \forall g\in G$\;
        \For{$t=1$ to $T \bmod n$}{
            \For{$i=1$ to $n$}{
                $g'\gets \argmin_{g:\, x_g >0}\text{rank}(g)$\;
                $x_{g'} \gets x_{g'} -1 $\;
                $f(i, g') \gets \lceil \sfrac{T}{n} \rceil $\;
            }
        }
    }
    $A\gets \text{GenerateFromFreq}(f)$\;

\caption{Computing an \EFone\ repeated matching under identical valuations} \label{alg:identical}
\end{algorithm}

We now use Algorithm~\ref{alg:identical} to prove the next statement.

\begin{theorem}
Given a repeated matching instance with identical valuations, an EF1 repeated matching exists and can be computed in polynomial time.
\end{theorem}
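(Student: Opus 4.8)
The plan is to analyze Algorithm~\ref{alg:identical} directly and argue correctness, termination, polynomial running time, and the \EFone\ guarantee separately. Since valuations are identical, I drop the agent subscript and write $v(g,t)$. Correctness of the output structure is the easy part: each agent receives $\lfloor T/n\rfloor$ copies of every item in the initialization, and the round-robin phase runs for exactly $T\bmod n$ iterations of the outer loop, handing out one additional copy to each agent per iteration. So every agent's bundle has size $n\lfloor T/n\rfloor + (T\bmod n)=T$, and for each item exactly $T\bmod n$ of its extra copies are distributed (one per round-robin phase), giving each item a total multiplicity of $n\lfloor T/n\rfloor + (T\bmod n)=T$ across all agents. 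Hence the frequency matrix $f$ is consistent and $\text{GenerateFromFreq}()$ produces a valid repeated matching via the Hall-theorem decomposition described in Section~\ref{sec:prelim}. Polynomial running time is immediate, as the loops run $O(Tn)$ times and each $\argmin$ over the ranking is computable in $O(n)$.

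\emph{The fairness argument is the heart of the proof.} Fix two agents $i$ and $j$ and compare their bundles. Since the $\lfloor T/n\rfloor$ base copies of every item are identical across agents, the two bundles $A_i$ and $A_j$ differ only in the set of items on which they received an extra $\lceil T/n\rceil$-th copy during the round-robin phases. Call these the \emph{extra items} of $i$ and of $j$. Because the round-robin order is $1,2,\dots,n$ and, within each phase, each agent picks the currently available lowest-rank (i.e.\ highest-valued $\lceil T/n\rceil$-th copy) item, I expect the standard round-robin argument to show that agent $i$ does not envy agent $j$ by more than one item's worth. Concretely, in each phase $t$, agent $i$ picks before agent $j$ whenever $i<j$, so $i$'s choice in phase $t$ is at least as valuable (in its $\lceil T/n\rceil$-th copy) as $j$'s choice in the same phase. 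When $i>j$, I would pair agent $i$'s pick in phase $t$ with agent $j$'s pick in phase $t+1$: since $i$ picks after $j$ in phase $t$ but the ranking is fixed, $i$'s phase-$t$ item is worth at least as much as $j$'s phase-$(t+1)$ item. This pairing leaves only $j$'s single pick from the first phase unmatched, which is exactly the one item we are allowed to drop for \EFone.

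The main subtlety I anticipate is that \EFone\ compares $v_i(A_i)$ against $v_i(A_j\setminus\{g\})$ using the \emph{bundle} valuation, which sums over copies $t=1,\dots,N(B,g)$, not just the marginal value of a single copy. I need to check that the per-phase pairing above, which compares $\lceil T/n\rceil$-th copies, correctly bounds the difference of the full bundle values. This works because the base allocation contributes identically to both bundles and cancels, so the difference $v(A_j)-v(A_i)$ equals exactly the sum of the marginal $\lceil T/n\rceil$-th copy values of $j$'s extra items minus those of $i$'s extra items; the pairing then bounds this difference by the marginal value of $j$'s single unmatched first-phase item, which is precisely $v(A_j)-v(A_j\setminus\{g^*\})$ for the appropriate $g^*$. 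Putting these together yields $v_i(A_i)\ge v_i(A_j\setminus\{g^*\})$ for a suitable item $g^*\in A_j$, establishing \EFone. I would close by noting that the argument uses non-negativity only implicitly (through the ranking being well-defined) and otherwise relies purely on the fixed-ranking round-robin structure.
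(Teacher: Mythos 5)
Your proposal is correct and follows essentially the same route as the paper's proof: cancel the identical base allocation of $\lfloor T/n\rfloor$ copies, then use the shifted round-robin pairing (agent $i$'s phase-$t$ pick against agent $j$'s phase-$(t+1)$ pick, justified by the fixed common ranking), leaving $j$'s first-phase pick as the single item removed for \EFone. One small correction to your closing remark: non-negativity is used explicitly, not merely implicitly---it is needed to discard agent $i$'s unmatched last-phase pick, i.e.\ the leftover term $v(g_{i,T\bmod n},\lceil T/n\rceil)$, exactly as in the paper's first inequality.
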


\begin{proof}
Algorithm~\ref{alg:identical} clearly runs in polynomial time. It remains to prove that it always returns an \EFone\ repeated matching.  Consider its application to a repeated matching instance $I=\langle \A,\G,v,T\rangle$, where $v$ is non-negative. The repeated matching returned is clearly \EFone\ if $T$ is an integer multiple of $n$; in this case, all agents get the same number of copies of all items and nobody is envious. 

Otherwise, since $T \bmod n \leq n-1$ copies of each item are available in the round-robin phases and all the remaining $T \bmod n$ copies of each item are picked in consecutive round-robin steps, no agent gets more than one copy of the same item in the round robin phases. Let $g_{i,t}$ be the item agent $i$ gets in the round robin phase $t\in \{1, 2, ..., T\bmod n\}$. Consider two agents $i$ and $j$ and observe that the repeated matching $A$ returned by Algorithm~\ref{alg:identical} satisfies
\begin{align*}
    v(A_i)-v(A_j\setminus\{g_{j,1}\}) &=\sum_{t=1}^{(T\bmod n)-1}{(v(g_{i,t},\lceil T/n\rceil)-v(g_{j,t+1},\lceil T/n\rceil))}+v(g_{i,T\bmod n})\\
    &\geq \sum_{t=1}^{(T\bmod n)-1}{(v(g_{i,t},\lceil T/n\rceil)-v(g_{j,t+1},\lceil T/n\rceil))} \geq 0,
\end{align*}
as \EFone\ requires. The equality follows since both agents $i$ and $j$ get $\lfloor T/n \rfloor$ copies of each item at the beginning of the algorithm and, then, the valuation difference is due to the $\lceil T/n\rceil$-th copies of items allocated in the round-robin phases. The first inequality is due to the non-negativity of valuations. The second one follows since the item that agent $i$ picks at the round-robin phase $t$ has not higher rank than the item agent $j$ picks in the next phase $t+1$.
\end{proof}

\subsection{Constant Valuations}\label{subsec:constant}
Before discussing the case of general valuations, we briefly discuss the seemingly related problem of fair division with cardinality constraints and whether existing results can be used to obtain \EFone\ repeated matchings in our case. \citet{biswas2018fair} consider an extension of the standard fair division setting where a set of items (goods) need to be allocated to a set of agents with additive valuations for the items. The additional feature of their problem is that the set of items is partitioned into categories and each category has a cardinality constraint. The objective is now to compute an allocation of the items to the agents, in which the number of items each agent gets from each category does not exceed the cardinality constraint of that category. \citet{biswas2018fair} show that allocations that satisfy such cardinality constraints and are furthermore \EFone\ do exist and can be computed in polynomial time.

Notice that the results of \citet{biswas2018fair} can be used to compute \EFone\ repeated matchings for instances with constant valuations. Indeed, given a repeated matching instance $I=\langle \A,\G,\{v_i\}_{i\in \A},T)$, it suffices to consider a fair division instance $I'$ with the $n$ agents in $\A$, $T$ distinct items for each item $g$ in $\G$, each of value $v_i(g)$ to agent $i$, and a cardinality constraint of $T$ for the whole set of items. It can be easily seen that any \EFone\ allocation for instance $I'$ naturally corresponds to an \EFone\ repeated matching for instance $I$ and vice versa. Unfortunately, for non-constant valuations, this reduction does not work as it seems impossible to express the history-dependent valuations in our model with additive valuations for items in the model of \citet{biswas2018fair}. 

An algorithmic idea for repeated matchings that is inspired by Algorithm~\ref{alg:identical} is to begin by assigning $\lfloor T/n \rfloor$ copies of each item to each agent and distribute the remaining $T\bmod n$ copies of each item so that each agent gets at most one additional copy. Can we achieve \EFone\ in this way for general valuations? This requires the computation of an \EFone\ repeated matchings on instances with $T<n$, in which each agent gets at most one copy of each item (and $T$ copies in total). Even though additivity would not be a problem anymore, it is still not clear how to express such instances in the model of \citet{biswas2018fair} using cardinality constraints defined on a {\em single} partition of the items only.

\subsection{General Valuations}\label{subsec:general}
We now prove that \EFone\ repeated matchings can be computed in polynomial time for general non-negative valuations when the number $T$ of rounds and the number $n$ of agents/items satisfy a particular condition. 

\begin{theorem}\label{thm:EF1}
Given a repeated matching instance $I$ with $n$ agents/goods and $T$ rounds such that $T\bmod n \in\{0,1,2,n-1\}$, an EF1 repeated matching exists and can be computed in polynomial time.
\end{theorem}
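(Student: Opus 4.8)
The plan is to reduce the general case to the round-robin-style argument that already worked for identical valuations, treating the four residue classes $T \bmod n \in \{0,1,2,n-1\}$ separately. As in Algorithm~\ref{alg:identical}, I would first give every agent $\lfloor T/n\rfloor$ copies of every item; this symmetric base allocation contributes equally to all agents, so it is \EFone-neutral, and the whole problem reduces to distributing the residual $r = T \bmod n$ copies of each item among the $n$ agents, where each agent should receive exactly $r$ extra copies (distinct items, one copy each, since $r \le n-1$). The key observation is that this residual problem is itself a repeated matching instance with $T' = r$ rounds in which each agent gets at most one copy of each item, and the valuations that matter are the ``marginal'' values $v_i(g,\lceil T/n\rceil)$ of the $\lceil T/n\rceil$-th copy. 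So it suffices to produce, for each $r \in \{0,1,2,n-1\}$, an \EFone\ assignment of these $r$ marginal copies.

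The cases $r=0$ and $r=1$ are immediate: for $r=0$ everyone holds identical bundles and nobody envies; for $r=1$ each agent picks exactly one distinct item, so after removing the single good in agent $j$'s residual bundle the comparison is against the empty residual, and nonnegativity of the base allocation plus nonnegativity of valuations gives \EFone. The substantive work is $r=2$ and $r=n-1$. For $r=2$, I would think of the two extra copies each agent receives as forming a structure where I can pair up the agents' picks so that the good to be removed (the ``last'' copy) dominates the envy; concretely I would try to assign the two residual copies via two passes of a matching/round-robin where the ordering guarantees that for any pair $i,j$ the better of agent $j$'s two extra items can be designated as the removable good, leaving $v_i(A_i)$ at least $v_i(A_j \setminus \{g\})$. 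For $r = n-1$, the natural dual view is that giving each agent $n-1$ distinct extra copies is equivalent to each agent \emph{omitting} exactly one item from the full set of marginal copies; here I would set up a single perfect matching between agents and the one item each agent omits, and argue \EFone\ by comparing against the single removable good in the other agent's bundle.

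The main obstacle, and the step I expect to require the most care, is the $r=2$ case (and symmetrically $r=n-1$ via the omission viewpoint), because with general rather than identical valuations I can no longer rely on a single global ranking of items that is consistent across all agents. In the identical-valuation proof the crucial inequality $v(g_{i,t},\lceil T/n\rceil) \ge v(g_{j,t+1},\lceil T/n\rceil)$ came from a shared rank function; with heterogeneous $v_i$ I will instead need a combinatorial argument on the bipartite structure of who-gets-which marginal copy. The hard part will be showing that one can always orient or match these residual copies so that each agent, when comparing to another agent, can point to a specific removable good that neutralizes the envy simultaneously for all pairs. I anticipate handling this either by an explicit two-step allocation whose correctness is verified by the same telescoping-sum bookkeeping used in the identical case, or by invoking an envy-cycle-elimination-type subroutine on the $r$-round residual instance and proving it terminates with the desired guarantee. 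I would expect the residue classes outside $\{0,1,2,n-1\}$ to resist this approach, which is presumably why the theorem is stated only for these values.
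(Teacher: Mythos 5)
Your overall decomposition---give every agent $\lfloor T/n\rfloor$ copies of each item and then distribute the $r=T\bmod n$ residual copies, one copy of each item per ``phase''---is exactly the paper's starting point, and your handling of $r\in\{0,1\}$ and of $r=n-1$ is sound: for $r=n-1$ the paper (Algorithm~\ref{alg:ef1-2}) does precisely what you sketch, starting from $\lceil T/n\rceil$ copies of everything and letting each agent drop one distinct item, and \EFone\ follows because for agents $i,j$ one removes from $A_j$ the last copy of the item $g_i$ that $i$ dropped (which $A_j$ still holds at full multiplicity $\lceil T/n\rceil$), so $v_i(A_i)-v_i(A_j\setminus\{g_i\})=v_i(g_j,\lceil T/n\rceil)\geq 0$ by nonnegativity alone; indeed any perfect matching of agents to omitted items works there. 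The genuine gap is $r=2$, which you correctly identify as the crux but do not actually prove: you offer two candidate routes (``an explicit two-step allocation'' verified by telescoping, or ``an envy-cycle-elimination-type subroutine'') without exhibiting either. Restating the residual problem as a $2$-round instance with marginal values $v_i(g,\lceil T/n\rceil)$ does not by itself make progress---\EFone\ existence for general residues is exactly what is open---and envy-cycle elimination does not transfer to the matching setting, since giving an item to the unique source of the envy graph can leave it the unique source, so one cannot hand out one item per agent per round that way.

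The missing construction (Algorithm~\ref{alg:ef1-1}) is a forward round-robin in the order $1,\dots,n$ in which agent $i$ takes the unclaimed item maximizing $v_i(g,\lceil T/n\rceil)$, followed by a \emph{reverse} round-robin in the order $n,\dots,1$ in which agent $i$ takes the unclaimed item maximizing the value of her next copy $v_i(g,f(i,g)+1)$ (the paper does not force the two extra copies to be distinct items, unlike your setup, so an agent may take a second extra copy of her phase-one item, and the multiplicity bookkeeping matters). The pairwise analysis then designates \emph{different} removable goods in the two directions. For $i<j$: agent $i$ picked before $j$ in phase one, so with $S$ the base multiset, $v_i(A_i)\geq v_i(S)+v_i(g_i^1,\lceil T/n\rceil)\geq v_i(S)+v_i(g_j^1,\lceil T/n\rceil)=v_i(A_j\setminus\{g_j^2\})$, i.e.\ one removes $j$'s reverse-phase item. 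For $j>i$: agent $j$ picks before $i$ in the reverse phase, and one removes $i$'s forward-phase item $g_i^1$, using $v_j(g_j^2,\mu)\geq v_j(g_i^2,\lceil T/n\rceil)$ where $\mu$ is the multiplicity of $g_j^2$ in $A_j$, with a separate easy case when $g_j^1=g_i^2$ (then $j$'s own phase-one value already covers the comparison). Without committing to this reverse ordering (or an equally explicit rule) and carrying out this case analysis, the $r=2$ case---and hence the theorem for residue $2$, including all instances with four agents---is not established by your argument.
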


We prove Theorem~\ref{thm:EF1} constructively, by defining two algorithms for the cases $T\bmod n\in \{0,1,2\}$ (Algorithm~\ref{alg:ef1-1}) and $T\bmod n =n-1$ (Algorithm~\ref{alg:ef1-2}). ~

Algorithm~\ref{alg:ef1-1} computes the number of copies of each item  that each agent gets as follows. First, it gives to each agent $\lfloor T/n\rfloor$ copies of each item (line~\ref{alg1:L2}). If $T\bmod n \not=0$, it then runs a round-robin phase (lines~\ref{alg1:L3}-\ref{alg1:L8}) and then, if $T\bmod n=2$, it runs an additional reverse round-robin phase (lines~\ref{alg1:L9}-\ref{alg1:L14}). In the round-robin phase, the agents act according to the ordering $1, 2, ..., n$ (see the for-loop in lines~\ref{alg1:L5}-\ref{alg1:L8}). When it is agent $i$'s turn to act, she gets the item $\widehat{g}$ (identified in line~\ref{alg1:L6}) for which her value for the $\lceil T/n\rceil$-th copy is maximum among the items that have not been given to agents who acted before $i$ in the round-robin phase (the set variable $P$ is used to identify these items). In the reverse round-robin phase, the agents act according to the ordering $n, n-1, ..., 1$ (see the for-loop in lines~\ref{alg1:L11}-\ref{alg1:L14}). When it is agent $i$'s turn to act, she gets the item $\widehat{g}$ (identified in line~\ref{alg1:L12}) for which her value for the next copy is maximum among the items that have not been given to agents who acted before $i$ in the reverse round-robin phase. Finally, the algorithm transforms the matrix $f$ indicating the number of copies of each item each agent gets to a repeated matching by calling routine $\text{GenerateFromFreq}()$. Algorithm~\ref{alg:ef1-1} clearly runs in polynomial time. Its correctness is given by the next lemma.

\begin{algorithm}[!ht]
    \KwIn{Instance $I=\langle \A,\G,\{v_i\}_{i\in \A},T\rangle$ with $|\A|=n$ and $T\bmod n \in \{0,1,2\}$} 
    \KwOut{A repeated matching $A$}
    $f(i,g)\gets \lfloor T/n\rfloor, \forall i\in \A,\forall g\in G$\; \label{alg1:L2}
    \If{$T\bmod n>0$}{\label{alg1:L3}
        $P\gets \G$\;
        \For{$i=1$ to $n$}{\label{alg1:L5}
            $\widehat{g}\gets \argmax_{g\in P} v_i(g, \lceil T/n\rceil)$\;\label{alg1:L6}
            $f(i,\widehat{g})\gets \lceil T/n\rceil$\;
            $P\gets P \setminus \{\widehat{g}\}$\; \label{alg1:L8}
        }
    }
    \If{$T\bmod n=2$}{\label{alg1:L9}
        $P\gets \G$\;
        \For{$i=n$ to $1$}{\label{alg1:L11}
            $\widehat{g}\gets \argmax_{g\in P} v_i(g, f(i,g)+1)$\;\label{alg1:L12}
            $f(i,\widehat{g})\gets f(i,\widehat{g})+1$\;
            $P\gets P \setminus \{\widehat{g}\}$\;\label{alg1:L14}
        }
    }     
    $A\gets \text{GenerateFromFreq}(f)$\;

\caption{Computing an \EFone\ repeated matching} \label{alg:ef1-1}
\end{algorithm}

\begin{lemma}\label{lem:alg1}
The repeated matching $A=(A_1, ..., A_n)$ produced by Algorithm~\ref{alg:ef1-1} is \EFone.
\end{lemma}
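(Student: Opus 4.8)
The plan is to separate, for every agent, the $\lfloor T/n\rfloor$ ``base'' copies of each item that are handed out in line~\ref{alg1:L2} from the one or two ``extra'' copies distributed in the round-robin phases, and to reduce each envy comparison to the extras alone. Write $k=\lfloor T/n\rfloor$, so whenever $T\bmod n>0$ we have $\lceil T/n\rceil=k+1$. After line~\ref{alg1:L2} every agent holds exactly $k$ copies of each item, so each final bundle $A_i$ consists of these $k$ base copies plus $T\bmod n$ extra copies. Since every bundle contains at least $k$ copies of each item, and this stays true after deleting a single extra copy, the term $\sum_{g\in\G}\sum_{t=1}^{k}v_i(g,t)$ occurs identically in $v_i(A_i)$ and in $v_i(A_j\setminus\{g\})$ for any deleted extra $g$; it cancels, so it suffices to compare the value agent $i$ derives from her own extras with the value she derives from $j$'s surviving extra after one deletion. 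The cases $T\bmod n\in\{0,1\}$ are then immediate: for $T\bmod n=0$ all bundles are identical, and for $T\bmod n=1$ agent $j$ owns a single extra copy $a_j$, so $A_j\setminus\{a_j\}$ is exactly the base and $v_i(A_i)\ge\sum_{g}\sum_{t=1}^{k}v_i(g,t)=v_i(A_j\setminus\{a_j\})$ by non-negativity.

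The substantive case is $T\bmod n=2$. Denote by $a_i$ the item agent $i$ takes in the forward phase (line~\ref{alg1:L6}) and by $b_i$ the item she takes in the reverse phase (line~\ref{alg1:L12}); the picks are distinct within each phase but possibly $a_i=b_i$. Agent $i$'s total extra value equals $v_i(a_i,k+1)+v_i(b_i,f(i,b_i)+1)$, where $f(i,b_i)$ is read just before the reverse pick, so the reverse term is $v_i(b_i,k+1)$ if $b_i\ne a_i$ and $v_i(a_i,k+2)$ if $b_i=a_i$. I will use two ordering facts coming directly from the argmax choices: if $i<j$ then $a_j$ is still available when $i$ picks in the forward phase, giving $v_i(a_i,k+1)\ge v_i(a_j,k+1)$; and if $i>j$ then $b_j$ is still available when $i$ picks in the reverse phase (whose order is $n,\dots,1$), giving $v_i(b_i,f(i,b_i)+1)\ge v_i(b_j,f(i,b_j)+1)$.

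For $i<j$ I will delete $g=b_j$ from $A_j$; a short check (handling $a_j=b_j$ and $a_j\ne b_j$ uniformly) shows the surviving extra that $i$ sees in $A_j$ is exactly $v_i(a_j,k+1)$, which is at most $v_i(a_i,k+1)$ by forward availability and hence at most agent $i$'s full extra value, establishing \EFone. For $i>j$ I will instead delete $g=a_j$, so that the surviving extra $i$ sees is $v_i(b_j,k+1)$, and bound agent $i$'s extra value below by this quantity. This last step is where the bookkeeping is delicate and is the main obstacle: the reverse fact only yields $v_i(b_i,f(i,b_i)+1)\ge v_i(b_j,f(i,b_j)+1)$, and when $b_j=a_i$ the index $f(i,b_j)+1$ equals $k+2$ rather than $k+1$, so — valuations not being monotone — that inequality need not dominate $v_i(b_j,k+1)$. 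I will resolve this by splitting on whether $b_j=a_i$: if $b_j\ne a_i$ then $f(i,b_j)=k$ and the reverse fact gives exactly $v_i(b_i,f(i,b_i)+1)\ge v_i(b_j,k+1)$; if $b_j=a_i$ then $v_i(b_j,k+1)=v_i(a_i,k+1)$ is already one of the two terms of agent $i$'s extra value. In either subcase agent $i$'s extra value, being a sum of two non-negative terms one of which is at least $v_i(b_j,k+1)$, dominates $v_i(b_j,k+1)$, which completes the proof.
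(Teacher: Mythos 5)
Your proof is correct and takes essentially the same route as the paper's: you delete the same copies (the envied agent's reverse-phase pick $b_j$ when $i<j$, her forward-phase pick $a_j$ when $i>j$), invoke the same forward/reverse availability inequalities from the argmax choices, and make the same final case split on whether $b_j=a_i$ (the paper's $g_j^1=g_i^2$ case). Your explicit base-plus-extras cancellation is just a restatement of the paper's bookkeeping via $v_i(S)$, so no further changes are needed.
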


\begin{proof}
Let $S$ denote the multiset that contains each item with multiplicity $\lfloor T/n\rfloor$. If $T\bmod n=0$, then $A_i=S$ for every agent $i$ and, hence, agents are not envious of each other. If $T\bmod n=1$, the final repeated matching is obtained after the execution of the round-robin phase. Consider two agents $i$ and $j$. Denoting by $g_j$ the item agent $j$ gets in this phase, agent $i$ has value $v_i(A_i)\geq v_i(S) = v_i(A_j\setminus \{g_j\})$, i.e., she satisfies the \EFone\ condition. 

If $T\bmod n=2$, the final repeated matching is obtained after the execution of the reverse round-robin phase. Consider two agents $i$ and $j$ with $i<j$. Let $g_i^1$ and $g_j^1$ be the items the agents $i$ and $j$ get in the round-robin phase and $g_i^2$ and $g_j^2$ be the items they get in the reverse round-robin phase, respectively. Agent $i$ has value 
\begin{align*}
v_i(A_i) &\geq v_i(S)+v_i(g_i^1,\lceil T/n \rceil) \geq v_i(S)+v_i(g_j^1,\lceil T/n \rceil) = v_i(A_j\setminus \{g_j^2\}).
\end{align*}
The second inequality follows since agent $i$ prefers item $g_i^1$ to item $g_j^1$ in the round-robin phase. For agent $j$, we distinguish between two cases. Let $\mu$ denote the multiplicity of item $g_j^2$ in $A_j$. If $g_j^1\not=g_i^2$, we have that, in the reverse round-robin phase, agent $j$ prefers the $\mu$-th copy of $g_j^2$ to the $\lceil T/n \rceil$-th copy of $g_i^2$, i.e., $v_j(g_j^2,\mu)\geq v_j(g_i^2,\lceil T/n\rceil)$. Then, we have 
\begin{align*}
v_j(A_j) &\geq v_j(S)+v_j(g_j^2,\mu) \geq v_j(S)+v_j(g_i^2,\lceil T/n\rceil)=v_j(A_i\setminus \{g_i^1\}). 
\end{align*}
If $g_j^1=g_i^2$, we have 
\begin{align*}
v_j(A_j) &\geq v_j(S)+v_j(g_j^1,\lceil T/n\rceil) = v_j(S)+v_j(g_i^2,\lceil T/n\rceil)=v_j(A_i\setminus \{g_i^1\}).
\end{align*}
Thus, the \EFone\ conditions for agents $i$ and $j$ are satisfied.
\end{proof}

Algorithm~\ref{alg:ef1-2} uses a similar structure. It starts by giving $\lceil T/n\rceil$ copies of each item to each agent (in line~\ref{alg2:L2}) and then removes the copy of a distinct item from each agent by running a round-robin phase (lines~\ref{alg2:L3}-\ref{alg2:L7}). 
When it is agent $i$'s turn to act, she gets rid of a copy of the item $\widehat{g}$ (identified in line~\ref{alg2:L5}) for which her value for the $\lceil T/n\rceil$-th copy is minimum among the items that have not been gotten rid by agents who acted before $i$ in the round-robin phase.

\begin{algorithm}[!ht]
    \KwIn{Instance $I=\langle \A,\G,\{v_i\}_{i\in \A},T\rangle$ with $|\A|=n$ and $T\bmod n= n -1$}
    \KwOut{A repeated matching $A$}
    $f(i,g)\gets \lceil T/n\rceil, \forall i\in \A,\forall g\in G$\;\label{alg2:L2}
    $P\gets \G$\;\label{alg2:L3}
    \For{$i=1$ to $n$}{\label{alg2:L4}
        $\widehat{g}\gets \argmin_{g\in P} v_i(g, \lceil T/n\rceil)$\;\label{alg2:L5}
        $f(i,\widehat{g})\gets f(i,\widehat{g})-1$\;\label{alg2:L6}
        $P\gets P \setminus \{\widehat{g}\}$\;\label{alg2:L7}
    }
    $A\gets \text{GenerateFromFreq}(f)$\;

\caption{Computing an \EFone\ repeated matching} \label{alg:ef1-2}
\end{algorithm}

\begin{lemma}\label{lem:alg2}
The repeated matching $A=(A_1, ..., A_n)$ produced by Algorithm~\ref{alg:ef1-2} is \EFone.
\end{lemma}

\begin{proof}
Let $i$ and $j$ be two agents and denote by $g_i$ and $g_j$ the items that are removed from their bundles in the round-robin phase. We have \begin{align*}
v_i(A_i) &=v_i(A_j)+v_j(g_j,\lceil T/n \rceil)-v_i(g_i,\lceil T/n \rceil) \geq v_i(A_j)-v_i(g_i,\lceil T/n \rceil)=v_i(A_j\setminus\{g_i\}),
\end{align*}
as desired. The last equality follows since $A_j$ has exactly $\lceil T/n \rceil$ copies of item $g_i$.
\end{proof}

Theorem~\ref{thm:EF1} implies the following corollary.

\begin{corollary}
In any repeated matching instance with up to four agents/goods, an \EFone\ repeated matching always exists.
\end{corollary}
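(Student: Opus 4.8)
The plan is to derive the corollary directly from Theorem~\ref{thm:EF1} by a short case analysis on the number of agents. Theorem~\ref{thm:EF1} guarantees an \EFone\ repeated matching whenever $T\bmod n\in\{0,1,2,n-1\}$, so it suffices to verify that for every $n\le 4$ and every number of rounds $T$, the residue $T\bmod n$ lands in this set. Since $T\bmod n$ ranges over $\{0,1,\dots,n-1\}$, the task reduces to checking the set inclusion $\{0,1,\dots,n-1\}\subseteq\{0,1,2,n-1\}$ for each $n\in\{1,2,3,4\}$.

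First I would dispose of the cases $n\le 3$ at once: here $\{0,1,\dots,n-1\}\subseteq\{0,1,2\}$, and $\{0,1,2\}$ is contained in $\{0,1,2,n-1\}$, so the hypothesis of Theorem~\ref{thm:EF1} is met for every $T$. The only case requiring a moment's attention is $n=4$, where the residues are $\{0,1,2,3\}$; here the value $3$ is accounted for precisely because $n-1=3$, so $\{0,1,2,3\}=\{0,1,2,n-1\}$ and the theorem again applies. Combining the two parts yields the corollary, and the existence follows without further computation since all the work is already packaged in Theorem~\ref{thm:EF1}.

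The main thing to get right is not an obstacle so much as a sanity check: one should observe why the argument is tight and stops at four agents. For $n=5$ the residues include $3$, but $n-1=4$, so $3\notin\{0,1,2,n-1\}$ and Theorem~\ref{thm:EF1} no longer covers all values of $T$. This confirms that ``four agents'' is exactly the threshold up to which the full residue set is forced into the admissible set $\{0,1,2,n-1\}$, and it explains why the corollary is stated for $n\le 4$ rather than more generally.
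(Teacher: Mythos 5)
Your proof is correct and is exactly the argument the paper intends: the corollary is stated as an immediate consequence of Theorem~\ref{thm:EF1}, since for $n\le 4$ every possible residue $T\bmod n$ lies in $\{0,1,2,n-1\}$. Your additional remark about why the argument stops at $n=4$ (the residue $3$ escaping the set when $n=5$) is a sound sanity check but not needed for the statement itself.
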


\section{Are Fairness and Efficiency Compatible?} \label{sec:swef1}
In this section, we show that achieving the concepts of efficiency and fairness simultaneously is computationally intractable. In particular, we show in Theorem~\ref{thm:inapprox} below that even approximating the maximum social welfare of \EFone\ repeated matching is hard. Our proof is inspired by a reduction by~\citet{barman2019fair} but is more involved. Interestingly, it uses instances with constant valuations and comes in sharp contrast to achieving the two concepts separately. We remind the reader that, for such instances, an \EFone\ repeated matching can be computed in polynomial time by the techniques of \citet{biswas2018fair} while a polynomial time algorithm for computing social welfare maximizing repeated matchings follows by Theorem~\ref{thm:sw-non-dec}. 

\begin{theorem}\label{thm:inapprox}
For every constant $\epsilon>0$, approximating the maximum social welfare of \EFone\ repeated matchings on instances with $n$ agents/goods and $T$ rounds within a factor of $O\left(\min\{n^{1/3-\epsilon},T^{1-\epsilon}\}\right)$ is NP-hard.
\end{theorem}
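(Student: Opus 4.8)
The plan is to prove this inapproximability result via a reduction from a problem that is known to be hard to approximate with a matching gap, most naturally an independent-set-type or graph-coloring-type problem, since the target factor $O(n^{1/3-\epsilon})$ strongly suggests reducing from a problem whose hardness is governed by such a power of the instance size. The authors signal that the reduction is inspired by \citet{barman2019fair} but more involved, so I would first locate the gadget structure in that reduction and adapt it to the repeated-matching setting. The key conceptual move is that we are not asking for an \EFone\ solution to exist (it does, by the techniques of \citet{biswas2018fair} on constant valuations), but rather forcing the \EFone\ constraint to act as a bottleneck: the fairness requirement must sharply restrict which high-welfare repeated matchings are attainable, so that distinguishing high-welfare \EFone\ solutions from low-welfare ones encodes the YES/NO instances of the source problem.

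\textbf{Construction and the role of constant valuations.}

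First I would design an instance with constant valuations (so that Theorem~\ref{thm:sw-non-dec} and the \citet{biswas2018fair} machinery both apply, making the contrast sharp). With constant valuations and $\lfloor T/n \rfloor$ baseline copies of each item handed to every agent, the \EFone\ condition reduces — as in the identical-valuations analysis — to a constraint on how the residual $T \bmod n$ copies of each item are distributed. The design goal is to pick the number of agents $n$, the number of rounds $T$, and the valuation profile so that an \EFone\ repeated matching of high social welfare corresponds exactly to a combinatorial object (e.g., a large independent set, or a valid partial assignment) in the source instance, while every \EFone\ repeated matching in a NO-instance is forced by the envy constraints to have social welfare smaller by the desired factor. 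The delicate point is to ensure that welfare-maximizing assignments that \emph{violate} \EFone\ are exactly the ones the fairness constraint must forbid, so that the gap is induced by the interaction of fairness and efficiency rather than by efficiency alone.

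\textbf{Establishing the gap and handling both parameters.}

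Next I would prove the two directions: in the YES case, exhibit an explicit \EFone\ repeated matching achieving the large welfare, verifying the envy condition agent-by-agent much as in Lemma~\ref{lem:alg1}; in the NO case, argue that any \EFone\ repeated matching has welfare at most a $1/O(\min\{n^{1/3-\epsilon}, T^{1-\epsilon}\})$ fraction of the YES value, by showing that the envy constraints prevent concentrating value on the profitable items. To obtain the stated $\min$ over $n^{1/3-\epsilon}$ and $T^{1-\epsilon}$, I expect the construction must be parameterized so that both the number of agents and the number of rounds can independently serve as the hardness scale — likely by padding with dummy agents/items (to inflate $n$ relative to the combinatorial core) and by replicating structure across rounds (to inflate $T$), then taking the governing bound as whichever parameter is smaller.

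\textbf{The main obstacle.}

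The hardest part will be calibrating the soundness side: showing that in a NO-instance \emph{every} \EFone\ repeated matching — not just the natural or greedy ones — suffers the full welfare loss. Because \EFone\ only requires removing \emph{one} copy of a single item per pairwise comparison, an adversary has considerable freedom, and I must rule out clever allocations that hide a large welfare inside bundles whose envy is always excusable by a single high-value copy. This is precisely where the reduction becomes ``more involved'' than that of \citet{barman2019fair}: the repeated-matching structure, with its multiset bundles and history-independent constant values, gives the adversary extra maneuverability that a standard one-shot fair-division reduction does not face, so the envy analysis must tightly control the multiplicities of the profitable items across all agents simultaneously.
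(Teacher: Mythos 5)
There is a genuine gap: what you have written is a plan for finding a reduction, not a reduction. You correctly guess the source problem (an independent-set-type problem, which is indeed what the paper uses, via Zuckerman's $|V|^{1-\epsilon}$-hardness of maximum independent set) and the general shape (YES/NO gap where the \EFone\ constraint, not welfare maximization alone, creates the bottleneck), but the proof's entire content lies in the gadget you do not supply. In the paper's construction, given $G=(V,E)$ one takes $T=|V|$ rounds and $n=(2|V|+1)|E|+1$ agents: for each edge $e=(x,y)$ there are $2|V|+1$ \emph{edge agents} who value every copy of the node items $g_x,g_y$ at a tiny $\delta<|V|^{-2}$ and nothing else, plus one \emph{special agent} who values every copy of every node item at $1$, with dummy items padding the rest. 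The soundness argument — the part you flag as the "main obstacle" but leave open — is then a short pigeonhole-plus-\EFone\ argument: only $2|V|$ copies of $g_x$ and $g_y$ exist, so among the $2|V|+1$ edge agents of $e$ at least one receives value $0$; \EFone\ of that zero-value agent toward the special agent forces the special agent to hold at most one copy from $\{g_x,g_y\}$, for every edge simultaneously, so her node items form an independent set and any \EFone\ repeated matching has welfare below $K+1$ (the edge agents contribute at most $|V|^2\delta<1$). The completeness direction is an explicit \EFone\ matching of welfare at least $K$. Without this (or an equivalent) mechanism, your step "show that in a NO-instance every \EFone\ repeated matching suffers the full welfare loss" remains an unproven claim, and it is exactly the nontrivial part of the theorem.

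Two further points where your sketch diverges from what works. First, your proposed mechanism via $\lfloor T/n\rfloor$ baseline copies and a constraint on the residual $T\bmod n$ copies does not apply here: the construction has $n\gg T$, so $\lfloor T/n\rfloor=0$ and the \EFone\ leverage comes purely from the zero-value edge agents' comparisons with the special agent. Second, the factor $\min\{n^{1/3-\epsilon},T^{1-\epsilon}\}$ is not obtained by independently padding $n$ and $T$; it falls out of the size accounting of a single construction, namely $T=|V|$ and $n\leq|V|^3$, so that the $|V|^{1-\epsilon}$ gap for independent set reads as $T^{1-\epsilon}$ in the rounds parameter and as $n^{1/3-\epsilon}$ in the agent parameter.
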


\begin{proof}
We present a polynomial-time reduction, which, given a graph $G=(V,E)$, constructs a repeated matching instance $I(G)$ in which the maximum social welfare over \EFone\ repeated matchings is in $[K,K+1)$ if and only if the maximum independent set in graph $G$ has size $K$. Our construction leads to instances with $n\leq |V|^3$ agents/items and $T=|V|$ rounds. Then, the theorem follows by the next well-known result by \citet{Z07}.

\begin{theorem}[\citealp{Z07}]
For every constant $\epsilon>0$, approximating the maximum independent set of a graph $G=(V,E)$ within a factor of $|V|^{1-
\epsilon}$ is NP-hard.
\end{theorem}

Let $\delta$ be such that $0<\delta<|V|^{-2}$. Let $G=(V,E)$ be a graph. Without loss of generality, we can assume that $G$ has no isolated nodes, as the existence of such nodes just makes the independent set problem easier. Given graph $G=(V,E)$, the instance $I(G)$ has $T=|V|$ rounds and $n=(2|V|+1)|E|+1$ agents/items. For every edge $e\in E$, $I(G)$ has $2|V|+1$ {\em edge agents} identified as $(e,i)$ for $i=1, 2, ..., 2|V|+1$. There is also a {\em special agent} $s$. For every node $u\in V$, there is a {\em node item} $g_u$. The instance also has $n-|V|$ {\em dummy items}. For edge $e=(x,y)\in E$, $i \in [2|V|+1]$, and $t\in [T]$, the valuation of the edge agent $(e,i)$ for the $t^{\text{th}}$ copy of the node item $g_u$ is $v_{e,i}(g_u,t)=\delta$ if $u=x$ or $u=y$, and $v_{e,i}(g_u,t)=0$ otherwise. For node $u\in V$ and $t\in [T]$, the valuation of the special agent for the $t^{\text{th}}$ copy of the node item $g_u$ is $v_{s}(g_u,t)=1$. All agents have zero valuations for the dummy items. 

Let $K$ be the size of the maximum independent set in $G$. We claim that any \EFone\ repeated matching of $I(G)$ has social welfare less than $K+1$. This will follow by two observations for any \EFone\ repeated matching $A$. First, for every edge $e$, there is some $i\in [2|V|+1]$ such that the edge agent $(e,i)$ has value $0$. Assume that this is not true for edge $e=(x,y)$. Hence, $2|V|+1$ copies of the node items $g_x$ and $g_y$ have been given to the edge agents corresponding to edge $e$. However, we only have $|V|$ copies of each item. Second, consider the node items the special agent gets. As for each edge $e=(x,y)$, there is some agent $(e,i)$ who has zero value, the special agent can get at most one copy of node items $g_x$ or $g_y$. 
As this holds for every $e\in E$, the node items that the special agent gets correspond to the nodes in an independent set in $G$. Hence, her value is at most $K$. The total value the edge agents get from the $|V|$ node items they get is at most $|V|^2\cdot \delta<1$. Hence, the social welfare of repeated matching $A$ is less than $K+1$.

We now show that an \EFone\ repeated matching of social welfare in $[K,K+1)$ does exist, when the graph $G$ has an independent set $S$ of size $K$. First, the special agent gets a single copy of node item $g_x$ for each $x\in S$. The remaining copies of the node items are given to the edge agents in such a way that each edge agent corresponding to edge $e=(x,y)$ gets at most one copy of either $g_x$ or $g_y$. This is always possible, since for every edge $e=(x,y)$, there are $2|V|+1$ edge agents to get at most one copy of either node item $g_x$ or node item $g_y$. Then, the copies of the dummy items are distributed so that each agent has exactly $|V|$ item copies. As every edge agent has at most one copy of a node item, the \EFone\ conditions between any two of them are satisfied. Finally, the \EFone\  is satisfied between any edge agent and the special agent since the special agent gets at most one item copy for which the edge agent has positive value.
\end{proof}

\section{Swap Envy-Freeness}\label{sec:swapEF}
We now specifically turn our attention to repeated matching instances with mixed items. Consider the following instance with $n=2$ and $T=1$. One of the items is a good and the other is a chore. There are exactly two possible matchings. In either, the classical extension of \EFone\ for mixed items from the fair division literature (e.g., see \citealp{ACIW22}), which requires that the value of an agent is higher than that of another either by removing a single item from either one of the two bundles, is not satisfied.
Motivated by this simple example, we propose and investigate an alternate notion of fairness to \EFone\ for repeated matchings, which we call {\em swap envy-freeness} (\swapEF). 

\begin{definition}[\swapEF]
Let $I=\langle \A,\G,\{v_i\}_{i\in \A},T\rangle$ be a repeated matching instance with mixed items. A repeated matching $A=(A_1, ..., A_n)$ in $I$ is \swapEF\ if for every pair of agents $i,j \in \A$, either (i) or (ii) is true: 
\begin{enumerate}[label=(\roman*)]
    \item $v_i(A_i)\geq v_i(A_j)$;
    \item There exist items $g_i\in A_i$ and $g_j\in A_j$ such that $v_i(A_i\cup \{g_j\}\setminus \{g_i\}) \geq v_i(A_j\cup \{g_i\}\setminus \{g_j\})$. 
\end{enumerate}
\end{definition}
Condition (ii) requires that the value agent $i$ has for her bundle $A_i$ after replacing a copy of item $g_i$ with an extra copy of item $g_j$ is at least as high as her value for the bundle $A_j$ of agent $j$ after exchanging a copy of item $g_j$ with a copy of item $g_i$. We first find that Algorithm \ref{alg:identical} successfully finds a \swapEF\ repeated matching, even without the non-negativity constraint on valuations (the rank definition can be trivially adapted).

\begin{lemma}\label{lem:idswap}
Given a repeated matching instance $I=\langle \A,\G,v,T\rangle$ with identical valuations, the repeated matching returned by Algorithm \ref{alg:identical} is \swapEF.
\end{lemma}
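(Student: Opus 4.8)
The plan is to show that the repeated matching produced by Algorithm~\ref{alg:identical} satisfies \swapEF\ by leveraging the structure already established in the \EFone\ analysis, adapting it to drop the non-negativity assumption. First I would handle the trivial case $T\bmod n=0$: here every agent receives $\lfloor T/n\rfloor$ copies of every item, so $A_i=A_j$ for all pairs and condition~(i) holds immediately. For the remaining case $T\bmod n>0$, I would reuse the key observation from the identical-valuations analysis that no agent gets more than one extra copy of any item in the round-robin phases, so the bundles differ only in the $T\bmod n$ items picked during those phases (the $\lceil T/n\rceil$-th copies). Letting $g_{i,t}$ denote the item agent $i$ picks in phase $t$, the difference $v(A_i)-v(A_j)$ telescopes into a sum over phases of terms $v(g_{i,t},\lceil T/n\rceil)-v(g_{j,t},\lceil T/n\rceil)$, just as before.

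The heart of the argument is to exhibit, for each ordered pair $(i,j)$, either that condition~(i) already holds or an explicit swap witnessing condition~(ii). Here is where the swap notion buys us exactly what we need without non-negativity. I would define the swap by choosing $g_i$ to be the item agent $i$ most regrets having picked (e.g.\ $g_{i,1}$, her first-phase pick, which is the lowest-rank item among those she received, hence the one she values most highly at its $\lceil T/n\rceil$-th copy) and choosing $g_j$ analogously from $A_j$. The point is that in the expression $v(A_i\cup\{g_j\}\setminus\{g_i\})-v(A_j\cup\{g_i\}\setminus\{g_j\})$, swapping re-indexes the telescoping sum so that agent $i$'s picks are compared against $j$'s picks shifted by one phase, precisely mirroring the inequality $\sum_t (v(g_{i,t},\lceil T/n\rceil)-v(g_{j,t+1},\lceil T/n\rceil))\geq 0$ that was established (via the rank ordering) in the \EFone\ proof. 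The crucial feature is that the rank-based inequality $v(g_{i,t},\lceil T/n\rceil)\geq v(g_{j,t+1},\lceil T/n\rceil)$ relies only on the ordering of picks, \emph{not} on any sign assumption, so it survives for mixed items.

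\textbf{Main obstacle.} The step I expect to be most delicate is verifying that the chosen swap items $g_i$ and $g_j$ are genuinely present in the respective bundles (so the swap is well-defined) and that the algebraic re-indexing after swapping produces exactly the telescoping comparison that the rank ordering controls. In the original \EFone\ proof the leftover positive term $v(g_{i,T\bmod n})$ was discarded using non-negativity; without that crutch I must instead show that the swap cancels or absorbs the boundary terms so that what remains is a clean sum of rank-controlled differences. Concretely, I would check that replacing $g_i$ by $g_j$ on $i$'s side and performing the reverse on $j$'s side aligns the $t$-th pick of $i$ with the $(t+1)$-th pick of $j$ across all phases, leaving no uncontrolled residual term. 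If a residual boundary term remains, I would argue that condition~(i) handles those pairs for which it is non-negative, and condition~(ii) handles the rest, so that every pair falls into at least one case. I anticipate that the argument splits naturally according to the relative positions of $i$ and $j$ in the round-robin order (as in Lemma~\ref{lem:alg1}), and the bookkeeping of which phase's pick dominates which is the part requiring the most care.
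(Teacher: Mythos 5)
Your overall plan (case split on $T\bmod n$, telescoping the difference over the round-robin phases, observing that the rank-based inequalities $v(g_{i,t},\lceil T/n\rceil)\geq v(g_{j,t+1},\lceil T/n\rceil)$ never use non-negativity, and splitting according to whether $i<j$ or $i>j$) matches the paper's proof, and you correctly flag the delicate step. But the concrete swap you propose is wrong, and that swap choice is exactly the missing idea. You take $g_i=g_{i,1}$ and $g_j$ ``analogously'' (i.e.\ $g_{j,1}$); note also that your description is internally inconsistent, since $g_{i,1}$ is agent $i$'s \emph{most} valued pick, not the one she ``most regrets.'' A symmetric first-pick-for-first-pick swap fails. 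Concretely, take $n=4$, $T=2$, identical constant values $10,\,10-\epsilon,\,5,\,0$ for items $a,b,c,d$: Algorithm~\ref{alg:identical} gives $A_1=\{a,c\}$, $A_3=\{b,d\}$, and for the pair $(i,j)=(3,1)$ condition (i) fails ($10-\epsilon<15$) while your swap yields $v(A_3\cup\{a\}\setminus\{b\})=10<15-\epsilon=v(A_1\cup\{b\}\setminus\{a\})$. (Even worse, with $n=3$, $T=2$ and values $10,10,0$, agents $1$ and $2$ both pick item $a$ in phase $1$, so your swap is vacuous while agent $2$ envies agent $1$.) The algebraic reason is that your swap only gains you the single term $v(g_{j,1},\lceil T/n\rceil)-v(g_{i,1},\lceil T/n\rceil)$, which cannot absorb the negative contributions of later phases.

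The paper's witness is asymmetric: letting $t_1$ and $t_2$ be the \emph{first} and \emph{last} phases in which $i$'s and $j$'s picks differ (for the case $i>j$; for $i<j$ condition (i) holds outright), it removes $g_{i,t_2}$, agent $i$'s pick at the last differing phase, and inserts $g_{j,t_1}$, agent $j$'s pick at the first differing phase. This shifts the comparison so that the inner sum telescopes into terms $v(g_{i,t},\lceil T/n\rceil)-v(g_{j,t+1},\lceil T/n\rceil)\geq 0$, the phases outside $[t_1,t_2]$ cancel exactly (the picks coincide there), and the only boundary term left is $v(g_{j,t_1},\lceil T/n\rceil)-v(g_{i,t_2},\lceil T/n\rceil)\geq 0$, which is non-negative because $j$'s pick at phase $t_1$ precedes $i$'s pick at phase $t_2$ in the picking order. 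Restricting to the first and last \emph{differing} phases is also what guarantees the swapped items are genuinely present and distinct. Your proposal leaves precisely this construction unresolved (``if a residual boundary term remains, I would argue\dots''), so as written it does not constitute a proof.
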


\begin{proof}
First observe that if $T$ is an integer multiple of $n$, the repeated matching computed by Algorithm~\ref{alg:identical} creates no envy to any agent and, hence, it is \swapEF\ as well. Now, assume that $T$ is not an integer multiple of $n$; the algorithm will execute $T\bmod n$ round-robin phases in this case. Denote by $g_{i,t}$ the item agent $i$ gets in the round-robin phase $t\in \{1,2, ..., T\bmod n\}$. Agent $i$ gets $\lceil T/n\rceil$ copies of each of these items, while it uses only $\lfloor T/n\rfloor$ copies of the rest. Then, for any pair of agents $i$ and $j$, observe that
\begin{align*}
    v(A_i)-v(A_j) &= \sum_{t=1}^{T\bmod n}{(v(g_{i,t},\lceil T/n\rceil)-v(g_{j,t},\lceil T/n\rceil))}.
\end{align*}
If $i<j$, then it is also $v(g_{i,t},\lceil T/n\rceil)\geq v(g_{j,t},\lceil T/n\rceil)$ for every round-robin phase $t$, which implies that $v(A_i)\geq v(A_j)$. The inequality is clear if both agents $i$ and $j$ get a copy of the same item in phase $t$. If this is not the case, the item agent $i$ picks has lower rank than the item agent $j$ picks later. This implies that $v(g_{i,t},\lceil T/n\rceil)\geq v(g_{j,t},\lceil T/n\rceil)$, too. 

Now assume that $i>j$. By the argument above, we also get $v(A_i)\geq v(A_j)$ when agents $i$ and $j$ get a copy of the same item in each round. So, in the following, let us assume that this is not the case and denote by $t_1$ and $t_2$ the first and the last round-robin phase in which agents $i$ and $j$ get different items. Then, 
\begin{align*}
    &v(A_i\cup \{g_{j,t_1}\} \setminus \{g_{i,t_2}\})-v(A_j\cup \{g_{i,t_2}\} \setminus \{ g_{j,t_1}\})\\ 
    &\quad= \sum_{t=1}^{t_1-1}{(v(g_{i,t},\lceil T/n\rceil)-v(g_{j,t},\lceil T/n\rceil))}\\
    &\quad+v(g_{j,t_1},\lceil T/n\rceil)+\sum_{t=t_1}^{t_2-1}{(v(g_{i,t},\lceil T/n\rceil)-v(g_{j,t+1},\lceil T/n\rceil))}-v(g_{i,t_2},\lceil T/n\rceil)\\
    &\quad+\sum_{t=t_2+1}^{T\bmod n}{(v(g_{i,t},\lceil T/n\rceil)-v(g_{j,t},\lceil T/n\rceil))}\\
    &\quad\geq v(g_{j,t_1},\lceil T/n\rceil)-v(g_{i,t_2},\lceil T/n\rceil)\geq 0.
\end{align*}
The first inequality follows since the first and third sums are equal to $0$ and the second one is non-negative. This is due to the following observations. First, notice that, by definition, both agents $i$ and $j$ get a copy of the same item in phases from $1$ to $t_1-1$ and from $t_2+1$ to $T\bmod n$. Second, notice that the item $g_{i,t}$ that agent $i$ picks in round-robin phase $t$ is either the same with the one that agent $j$ picks in the next round-robin phase $t+1$ or one that has lower rank (and, thus, is at least as preferable). The second inequality is due to the fact that the item that agent $j$ picks in the round-robin phase $t_1$ is at least as preferable to the one agent $i$ picks later in the round-robin phase $t_2\geq t_1$.

We have established the \swapEF\ requirements in any case and the proof is complete.
\end{proof}

We now turn our attention to general valuations. 


\begin{theorem}\label{thm:swapEF}
Given a repeated matching instance $I$ with mixed items, $n$ agents, and $T$ rounds such that $T\bmod n \in\{0,1,2,n-2,n-1\}$, a \swapEF\ repeated matching exists and can be computed in polynomial time.
\end{theorem}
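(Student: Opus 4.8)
The plan is to mirror the structure used for \EFone\ in Theorem~\ref{thm:EF1}, handling the five cases of $T\bmod n$ by variants of Algorithm~\ref{alg:ef1-1} and Algorithm~\ref{alg:ef1-2}, but now verifying the weaker \swapEF\ condition instead of \EFone. The two easy regimes are $T\bmod n=0$, where every agent gets the multiset $S$ containing each item $\lfloor T/n\rfloor$ times and so nobody envies anyone (condition~(i) holds trivially), and $T\bmod n=1$, where a single round-robin phase distributes one extra distinct item to each agent; here I expect the same argument as in Lemma~\ref{lem:alg1} to give outright envy-freeness, since each agent keeps the item maximizing her value among those still available when she acts. The crux is that \swapEF\ is \emph{weaker} than \EFone, so any case already handled for goods by Algorithms~\ref{alg:ef1-1}--\ref{alg:ef1-2} should carry over; the genuinely new content is the two extra cases $T\bmod n\in\{2,n-2\}$ and the fact that we no longer assume non-negativity.

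First I would treat $T\bmod n=2$ using the round-robin followed by reverse round-robin of Algorithm~\ref{alg:ef1-1}. For an ordered pair $i,j$ I would compute $v_i(A_i)-v_i(A_j)$ explicitly: after cancelling the common $\lfloor T/n\rfloor$ copies it reduces to a comparison between the two extra items $i$ picks and the two extra items $j$ picks. For the agent who acts earlier in the forward phase, her chosen item dominates $j$'s forward pick by the $\argmax$ in line~\ref{alg1:L6}, and for the agent acting earlier in the reverse phase her second pick dominates analogously; I expect that in each direction one of the two extra items is chosen to be at least as good, so that swapping the remaining disputed pair (using condition~(ii) with $g_i,g_j$ the two items on which the bundles differ) yields a non-negative difference, exactly paralleling the swap computation already carried out in Lemma~\ref{lem:idswap}. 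The case $T\bmod n=n-2$ I would handle symmetrically via a two-phase version of Algorithm~\ref{alg:ef1-2}: start everyone at $\lceil T/n\rceil$ copies of each item and remove two distinct items per agent through a round-robin and a reverse round-robin using $\argmin$, so that each agent sheds the two items she values least among those still removable. The arithmetic here is the ``dual'' of the $T\bmod n=2$ case, with removals of low-valued copies playing the role of additions of high-valued copies.

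For the remaining case $T\bmod n=n-1$, Lemma~\ref{lem:alg2} already produces an \EFone\ matching via Algorithm~\ref{alg:ef1-2}, and since \swapEF\ is implied by \EFone\ when valuations are non-negative, I would instead re-run that same removal argument directly against the \swapEF\ definition so that it goes through for mixed items: removing the least-valued copy $g_i$ from each agent, I would show $v_i(A_i)\geq v_i(A_j\setminus\{g_i\})$ as before, which is condition~(ii) in the degenerate form $g_i=g_j$ (replacing a copy of $g_i$ by a copy of the same item). The key point to check is that the identity $v_i(A_i)=v_i(A_j)+v_j(g_j,\lceil T/n\rceil)-v_i(g_i,\lceil T/n\rceil)$ from Lemma~\ref{lem:alg2} uses only additivity and survives verbatim for mixed items, so condition~(ii) follows from the $\argmin$ choice without any sign assumption.

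The main obstacle I anticipate is the $T\bmod n=2$ analysis for the ordered pair where the \emph{later}-acting agent must be shown non-envious: unlike \EFone, the swap operation moves a copy \emph{into} $A_i$ and simultaneously \emph{out} of $A_j$, so I must pick the swapped pair $(g_i,g_j)$ carefully and track four extra item-copies at once, and there is a real risk that a naive choice of which disputed items to exchange fails when the forward and reverse phases disagree on which item each agent prefers. I expect to resolve this by casing on whether the later agent's reverse-phase pick coincides with the other agent's forward-phase pick, exactly the $g_j^1=g_i^2$ versus $g_j^1\neq g_i^2$ split already appearing in Lemma~\ref{lem:alg1}; the negative-valuation generality should not break these comparisons because every inequality used is a direct consequence of an $\argmax$ or $\argmin$ selection rather than of non-negativity.
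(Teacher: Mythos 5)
Your algorithmic skeleton matches the paper exactly (Algorithm~\ref{alg:ef1-1} for $T\bmod n\in\{0,1,2\}$, and a removal round-robin plus reverse removal round-robin for $T\bmod n\in\{n-1,n-2\}$), but the proof plan rests on a false premise that creates a genuine gap. You assert that \swapEF\ is weaker than \EFone, and in the $T\bmod n=n-1$ case you propose to establish the \EFone-type inequality $v_i(A_i)\geq v_i(A_j\setminus\{g_i\})$ and read it as ``condition~(ii) in the degenerate form $g_i=g_j$.'' That reading is incorrect: taking $g_i=g_j$ in condition~(ii) gives $v_i(A_i\cup\{g_j\}\setminus\{g_i\})=v_i(A_i)$ and $v_i(A_j\cup\{g_i\}\setminus\{g_j\})=v_i(A_j)$, i.e.\ it collapses to condition~(i), not to the \EFone\ inequality. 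Indeed \EFone\ and \swapEF\ are incomparable (the paper's example with two agents, identical constant values $3$ and $2$, and $T=3$ is \EFone\ but not \swapEF), so carrying over the \EFone\ lemmas does not suffice. Moreover, the \EFone\ inequality you plan to re-derive itself uses non-negativity: in Lemma~\ref{lem:alg2} the step $v_i(A_j)+v_i(g_j,\lceil T/n\rceil)-v_i(g_i,\lceil T/n\rceil)\geq v_i(A_j)-v_i(g_i,\lceil T/n\rceil)$ needs $v_i(g_j,\lceil T/n\rceil)\geq 0$, so it does not ``survive verbatim for mixed items.'' The correct argument for $T\bmod n=n-1$ (as in the paper's Lemma~\ref{lem:alg2-swapEF}) is a case split: if $v_i(g_i,\lceil T/n\rceil)< v_i(g_j,\lceil T/n\rceil)$ then $v_i(A_i)>v_i(A_j)$ directly (condition~(i)); otherwise one performs a genuine swap of a copy of $g_j\in A_i$ with the copy deficit of $g_i$ in $A_j$, giving $v_i(S\setminus\{g_j\})\geq v_i(S\setminus\{g_i\})$ (condition~(ii)). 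Your plan as stated proves neither branch.

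A smaller instance of the same issue appears in your treatment of $T\bmod n=1$: you claim the round-robin yields ``outright envy-freeness,'' but the later-acting agent of a pair may strictly prefer the earlier agent's extra item, and with mixed items one also cannot argue $v_i(A_i)\geq v_i(S)$; here too the swap of the two extra copies (the paper's two-case argument in Lemma~\ref{lem:alg1-swapEF}) is what saves the claim, not envy-freeness. Your sketches for $T\bmod n=2$ and $T\bmod n=n-2$ do point at the right case analysis (including the $g_j^1=g_i^2$ split), so the fix is localized: replace every appeal to ``\EFone\ implies \swapEF'' or to non-negativity with the explicit two-case comparison of the disputed copies, exactly as in Lemmas~\ref{lem:alg1-swapEF} and~\ref{lem:alg2-swapEF}.
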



The proof of Theorem~\ref{thm:swapEF} uses Algorithm~\ref{alg:ef1-1} from Section~\ref{sec:fairness} for instances with $T\bmod n\in \{0,1,2\}$. For instances with $T\bmod n\in \{n-2,n-1\}$, we use an extension of Algorithm~\ref{alg:ef1-2} from Section~\ref{sec:fairness}, which runs an additional reverse round robin phase to remove one more distinct item from each agent when $T\bmod n = n-2$. We refer to this as Algorithm~\ref{alg:swapEF}; the lines~\ref{alg3:L9}-\ref{alg3:L15} implement the reverse round-robin phase, while the lines~\ref{alg3:L2}-\ref{alg3:L7} are identical to Algorithm~\ref{alg:ef1-2}. 

\begin{algorithm}[!ht]
    \KwIn{Instance $I=\langle \A,\G,\{v_i\}_{i\in \A},T\rangle$ with $|\A|=n$ and $T\bmod n \in \{n-1, n-2\}$}
    \KwOut{A repeated matching $A$}
    $f(i,g)\gets \lceil T/n\rceil, \forall i\in \A,\forall g\in G$\;\label{alg3:L2}
    $P\gets \G$\;\label{alg3:L3}
    \For{$i=1$ to $n$}{\label{alg3:L4}
        $\widehat{g}\gets \argmin_{g\in P} v_i(g, \lceil T/n\rceil)$\;\label{alg3:L5}
        $f(i,\widehat{g})\gets f(i,\widehat{g})-1$\;\label{alg3:L6}
        $P\gets P \setminus \{\widehat{g}\}$\;\label{alg3:L7}
    }
    
     \If{$T\bmod n = n-2$}{\label{alg3:L9}
        $P\gets \G$\;
        \For{$i=n$ to $1$}{\label{alg3:L11}
            $\widehat{g}\gets \argmin_{g\in P} v_i(g, f(i,g))$\;\label{alg3:L13}
        $f(i,\widehat{g})\gets f(i,\widehat{g})-1$\;\label{alg3:L14}
        $P\gets P \setminus \{\widehat{g}\}$\;\label{alg3:L15}
        }
    }     
    $A\gets \text{GenerateFromFreq}(f)$\;

\caption{Computing a \swapEF\ repeated matching} \label{alg:swapEF}
\end{algorithm}

The properties of Algorithms~\ref{alg:ef1-1} and~\ref{alg:swapEF} regarding \swapEF\ are given by the next two lemmas, which, together with the fact that both algorithms run in polynomial time, complete the proof of Theorem~\ref{thm:swapEF}.

\begin{lemma}\label{lem:alg1-swapEF}
The repeated matching $A=(A_1, ..., A_n)$ produced by Algorithm~\ref{alg:ef1-1} is \swapEF.
\end{lemma}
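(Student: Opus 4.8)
\textbf{Proof proposal for Lemma~\ref{lem:alg1-swapEF}.}

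The plan is to reuse the structure of the proof of Lemma~\ref{lem:alg1}, which already establishes the stronger \EFone\ property of Algorithm~\ref{alg:ef1-1} for non-negative valuations; here I must argue \swapEF\ directly because valuations can now be negative. I would split into the three cases $T\bmod n\in\{0,1,2\}$, exactly as in Algorithm~\ref{alg:ef1-1}. For $T\bmod n=0$ every agent receives the common multiset $S$ with $\lfloor T/n\rfloor$ copies of each item, so $v_i(A_i)=v_i(A_j)$ for all $i,j$ and condition (i) of \swapEF\ holds trivially. For $T\bmod n=1$ the matching is obtained after a single round-robin phase in which each agent $i$ gets one extra copy of her chosen item $g_i^1$; here I would show condition (i) holds for the pair $(i,j)$ whenever $i$ acts before $j$, since $i$ picks the item maximizing $v_i(\cdot,\lceil T/n\rceil)$ among the available items and $g_j^1$ is still available when $i$ acts, giving $v_i(A_i)=v_i(S)+v_i(g_i^1,\lceil T/n\rceil)\geq v_i(S)+v_i(g_j^1,\lceil T/n\rceil)\geq v_i(A_j)$ (the last step because $g_j^1$ and $g_i^1$ are distinct and $A_j=S\cup\{g_j^1\}$).

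The substantive case is $T\bmod n=2$, where each agent gets one extra copy $g_i^1$ in the forward round-robin phase and another $g_i^2$ in the reverse phase. Fix a pair $i,j$; I would consider the two sub-cases by priority. If $i$ acts before $j$ in the forward phase (so $j$ acts before $i$ in the reverse phase), then $i$ had first pick in the forward phase and I expect condition (i), $v_i(A_i)\geq v_i(A_j)$, to follow: $i$'s forward choice dominates what $j$ took, and for the reverse-phase contributions I would compare the copies carefully using that the reverse phase selects by $v_i(g,f(i,g)+1)$. The genuinely hard sub-case is when $i$ acts after $j$ in the forward phase, so that (i) may fail; here I must exhibit the swap of condition (ii). The natural candidate is to swap out a copy of $g_i^2$ (the item $i$ obtained with lowest priority, in the reverse phase) in favor of a copy of $g_j^1$ (the item $j$ grabbed ahead of $i$ in the forward phase), mirroring the swap $g_{j,t_1}$ for $g_{i,t_2}$ used in the identical-valuations argument of Lemma~\ref{lem:idswap}.

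Concretely, to verify condition (ii) I would expand $v_i(A_i\cup\{g_j\}\setminus\{g_i\})-v_i(A_j\cup\{g_i\}\setminus\{g_j\})$ with $g_i=g_i^2$ and $g_j=g_j^1$, cancel the common base multiset $S$ contributed to both bundles, and reduce the difference to a sum of pairwise comparisons of marginal valuations of the form $v_i(g_i^1,\lceil T/n\rceil)-v_i(g_j^?,\lceil T/n\rceil)$ plus the swapped terms $v_i(g_j^1,\cdot)-v_i(g_i^2,\cdot)$. Each comparison should be non-negative by the greedy/argmax choices: $i$'s forward pick $g_i^1$ beats any still-available item, and the swapped-in $g_j^1$ was preferred by $j$ but is brought into $i$'s bundle precisely to cover the item $g_j^1$ sitting in $A_j$. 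The main obstacle I anticipate is bookkeeping the coincidence cases (when $g_i^1,g_i^2,g_j^1,g_j^2$ are not all distinct, or when the swapped item already appears in a bundle so the multiplicities shift the relevant copy index $\mu$), exactly the $g_j^1=g_i^2$ branching that appeared in Lemma~\ref{lem:alg1}; I would handle these by the same case split on whether the swapped item coincides with another agent's choice, checking that the marginal-copy indices still line up so that each difference remains non-negative.
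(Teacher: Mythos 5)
Your high-level plan (reduce to comparisons of the $\lceil T/n\rceil$-th copies and exhibit a swap when condition (i) fails) is the right shape, but the core of the $T\bmod n=2$ case is wrong in two places. First, your claim that condition (i) holds whenever $i$ acts before $j$ in the forward phase does not hold for mixed items: in that sub-case $j$ acts before $i$ in the \emph{reverse} phase, so $v_i(g_j^2,\lceil T/n\rceil)$ can exceed $v_i(g_i^2,\lceil T/n\rceil)$ by an arbitrary amount and $i$ may envy $j$; a swap is still needed there (the paper swaps the reverse-phase items $g_i^2\leftrightarrow g_j^2$, using the forward-phase inequality $v_i(g_i^1,\cdot)\ge v_i(g_j^1,\cdot)$). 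Second, in your ``hard'' sub-case ($i$ after $j$ in the forward phase) the cross-phase swap you propose, removing $g_i^2$ and adding $g_j^1$, is not guaranteed to work: the resulting difference is $\bigl(v_i(g_i^1,\cdot)-v_i(g_j^2,\cdot)\bigr)+\bigl(v_i(g_j^1,\cdot)-v_i(g_i^2,\cdot)\bigr)$, and neither bracket is controlled by the algorithm's greedy choices. A concrete realizable failure: $n=5$, $T=2$; forward picks are $A,B,D,C,E$ by agents $1,\dots,5$ and reverse picks are $A,B,C,E,D$ by agents $5,\dots,1$, so for $j=1$, $i=4$ we get $g_j^1=A$, $g_j^2=D$, $g_i^1=C$, $g_i^2=B$, all distinct; with $v_i(A,1)=5$, $v_i(B,1)=0$, $v_i(C,1)=-10$, $v_i(D,1)=-1$, $v_i(E,1)=-20$, $v_i(C,2)=-100$, all of $i$'s greedy choices are respected, condition (i) fails ($-10<4$), and your swap gives $(-10-(-1))+(5-0)=-4<0$, violating condition (ii). The analogy with Lemma~\ref{lem:idswap} breaks precisely because the cross-phase swap there relies on the common ranking of identical valuations, which is unavailable here.

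The paper's proof instead case-splits, for each ordered pair, on the single phase in which $i$ lacks priority: if $i<j$ it uses $v_i(g_i^1,\cdot)\ge v_i(g_j^1,\cdot)$ and, when $v_i(g_i^2,\cdot)<v_i(g_j^2,\cdot)$, swaps the same-phase pair $g_i^2\leftrightarrow g_j^2$; if $i>j$ it uses $v_i(g_i^2,\cdot)\ge v_i(g_j^2,\cdot)$ and, when needed, swaps $g_i^1\leftrightarrow g_j^1$; the overlap cases $|\{g_i^1,g_i^2\}\cap\{g_j^1,g_j^2\}|\in\{1,2\}$ are handled separately with the analogous one-item swap. A smaller omission in your write-up: in the $T\bmod n=1$ case you only treat pairs where $i$ acts before $j$; when $i$ acts after $j$ condition (i) can fail for mixed items, and you need the (easy) swap of the two extra items, which holds for any such allocation without using the greedy property at all.
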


\begin{proof}
Let $S$ denote the multiset that contains each item with multiplicity $\lfloor T/n \rfloor$. If $T\bmod n=0$, then $A_i=S$ for every agent $i$ and, hence, the agents are not envious of each other. If $T\bmod n=1$, the final repeated matching is obtained after the execution of the round-robin phase. Consider two agents $i$ and $j$ and let $g_i$ and $g_j$ be the items the two agents get in this phase, respectively. If $v_i(g_i,\lceil T/n\rceil)\geq v_i(g_j,\lceil T/n\rceil)$, then \begin{align*}
v_i(A_i)&=v_i(S)+v_i(g_i,\lceil T/n \rceil) \geq v_i(S)+v_i(g_j,\lceil T/n \rceil)=v_i(A_j).
\end{align*}
Otherwise, if $v_i(g_i,\lceil T/n\rceil)< v_i(g_j,\lceil T/n\rceil)$, then 
\begin{align*}
v_i(A_i\cup\{g_j\}\setminus \{g_i\}) &=v_i(S)+v_i(g_j,\lceil T/n \rceil) >v_i(S)+v_i(g_i,\lceil T/n \rceil)=v_i(A_j\cup\{g_i\}\setminus \{g_j\}).
\end{align*}
In both cases, the \swapEF\ conditions are satisfied.

If $T\bmod n=2$, the final repeated matching is obtained after the execution of the reverse round-robin phase. Consider two agents $i$ and $j$. Let $g_i^1$ and $g_j^1$ be the items agents $i$ and $j$ get in the round-robin phase and $g_i^2$ and $g_j^2$ be the items they get in the reverse round-robin phase, respectively. We distinguish between three cases. If $|\{g_i^1,g_i^2\}\cap \{g_j^1,g_j^2\}|=2$, then $A_i$ and $A_j$ are effectively identical and agent $i$ does not envy agent $j$. If $|\{g_i^1,g_i^2\}\cap \{g_j^1,g_j^2\}|=1$, assume, without loss of generality, that $g_i^1=g_j^2=g$ and observe that $A_i$ has $\lceil T/n \rceil$ copies of $g_i^2$ and $\lfloor T/n \rfloor$ copies of $g_j^1$ and $A_j$ has $\lceil T/n \rceil$ copies of $g_j^1$ and $\lfloor T/n \rfloor$ copies of $g_i^2$. If $v_i(g_i^2,\lceil T/n \rceil) \geq v_i(g_j^1,\lceil T/n \rceil)$, then 
\begin{align*}
v_i(A_i)&=v_i(S\cup \{g\})+v_i(g_i^2,\lceil T/n\rceil) \geq v_i(S\cup \{g\})+v_i(g_j^1,\lceil T/n\rceil)=v_i(A_j).
\end{align*}
Otherwise, if $v_i(g_i^2,\lceil T/n \rceil) < v_i(g_j^1,\lceil T/n \rceil)$, then 
\begin{align*}
v_i(A_i\cup \{g_j^1\}\setminus \{g_i^2\}) &= v_i(S)+v_i(g_j^1,\lceil T/n\rceil) >v_i(S)+v_i(g_i^2,\lceil T/n \rceil)=v_i(A_j\cup \{g_i^2\}\setminus \{g_j^1\}).
\end{align*}
So, the \swapEF\ conditions are satisfied.

It remains to consider the case where $g_i^1$, $g_i^2$, $g_j^1$, and $g_j^2$ are distinct. Then, $A_i$ contains $\lceil T/n\rceil$ copies of $g_i^1$ and $g_i^2$ and $\lfloor T/n \rfloor$ copies of $g_j^1$ and $g_j^2$ and $A_j$ contains $\lceil T/n\rceil$ copies of $g_j^1$ and $g_j^2$ and $\lfloor T/n \rfloor$ copies of $g_i^1$ and $g_i^2$. If $i<j$, agent $i$ acts before agent $j$ in the round-robin phase and, hence, $v_i(g_i^1,\lceil T/n\rceil)\geq v_i(g_j^1,\lceil T/n\rceil)$. If $v_i(g_i^2,\lceil T/n\rceil)\geq v_i(g_j^2,\lceil T/n\rceil)$, then \begin{align*}v_i(A_i)&=v_i(S)+v_i(g_i^1,\lceil T/n\rceil)+v_i(g_i^2,\lceil T/n\rceil) \geq v_i(S)+v_i(g_j^1,\lceil T/n\rceil)+v_i(g_j^2,\lceil T/n\rceil)=v_i(A_j),
\end{align*}
and agent $i$ does not envy agent $j$. Otherwise, if $v_i(g_i^2,\lceil T/n\rceil)< v_i(g_j^2,\lceil T/n\rceil)$, then 
\begin{align*}
v_i(A_i\cup \{g_j^2\}\setminus \{g_i^2\}) &=v_i(S)+v_i(g_i^1,\lceil T/n\rceil)+v_i(g_j^2,\lceil T/n\rceil)\\
&>v_i(S)+v_i(g_j^1,\lceil T/n\rceil)+v_i(g_i^2,\lceil T/n\rceil)=v_i(A_i\cup \{g_i^2\}\setminus \{g_j^2\}),
\end{align*}
and the \swapEF\ condition is satisfied. If $i>j$, agent $i$ acts before $j$ in the reverse round-robin phase and, hence, $v_i(g_i^2,\lceil T/n\rceil)\geq v_i(g_j^2,\lceil T/n\rceil)$. If $v_i(g_i^1,\lceil T/n\rceil)\geq v_i(g_j^1,\lceil T/n\rceil)$, then \begin{align*}
v_i(A_i)&=v_i(S)+v_i(g_i^1,\lceil T/n\rceil)+v_i(g_i^2,\lceil T/n\rceil) \geq v_i(S)+v_i(g_j^1,\lceil T/n\rceil)+v_i(g_j^2,\lceil T/n\rceil)=v_i(A_j),
\end{align*}
and agent $i$ does not envy agent $j$. Otherwise, if $v_i(g_i^1,\lceil T/n\rceil)< v_i(g_j^1,\lceil T/n\rceil)$, then 
\begin{align*}
v_i(A_i\cup \{g_j^1\}\setminus \{g_i^1\})
&=v_i(S)+v_i(g_i^2,\lceil T/n\rceil)+v_i(g_j^1,\lceil T/n\rceil)\\
&>v_i(S)+v_i(g_j^2,\lceil T/n\rceil)+v_i(g_i^1,\lceil T/n\rceil) =v_i(A_i\cup \{g_i^1\}\setminus \{g_j^1\}),
\end{align*}
and the \swapEF\ condition is again satisfied.
\end{proof}

\begin{lemma}\label{lem:alg2-swapEF}
The repeated matching $A=(A_1, ..., A_n)$ produced by Algorithm~\ref{alg:swapEF} is \swapEF.
\end{lemma}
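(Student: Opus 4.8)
My plan is to reduce almost everything to one elementary fact about single swaps and to isolate a single genuinely hard case. The fact is: if the bundles of two agents differ by exactly one swap, i.e. $A_j = A_i\cup\{q\}\setminus\{p\}$ for some $p\in A_i$ and $q\in A_j$, then the ordered pair $(i,j)$ is automatically \swapEF. Indeed, if $v_i(A_i)\geq v_i(A_j)$ then condition (i) holds; otherwise condition (ii) with this very $p,q$ reads $v_i(A_j)\geq v_i(A_i)$ (since $A_i\cup\{q\}\setminus\{p\}=A_j$ and $A_j\cup\{p\}\setminus\{q\}=A_i$), which holds. So the real work is to describe how $A_i$ and $A_j$ differ.

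For $T\bmod n=n-1$ the reverse phase of Algorithm~\ref{alg:swapEF} is skipped, so each agent discards exactly one item, and these discarded items are distinct across agents. Writing $g_i$ for the item removed from agent $i$, the bundle $A_i$ holds $\lfloor T/n\rfloor$ copies of $g_i$ and $\lceil T/n\rceil$ copies of every other item; hence $A_i$ and $A_j$ differ by one copy of $g_i$ against one copy of $g_j$, a single swap, and the fact above finishes this case. For $T\bmod n=n-2$ each agent discards two copies, namely item $a_i$ in the forward phase and $b_i$ in the reverse phase, with the $a_i$'s distinct across agents and the $b_i$'s distinct across agents. I would set $W_i=\sum_{g}\sum_{t=1}^{\lceil T/n\rceil}v_i(g,t)$ and write every bundle value as $W_i$ minus the value to $i$ of the two discarded copies, so that $v_i(A_i)-v_i(A_j)$ depends only on the at most four items $a_i,b_i,a_j,b_j$. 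Then I would case on how these coincide. All non-generic cases collapse to the fact above or to a trivial one: if $a_i=b_j$ (or $b_i=a_j$) the shared deficit cancels and $A_i,A_j$ differ by a single swap; if both coincidences hold the bundles are identical; if one agent discarded an item twice ($a_i=b_i$), then among the two items deficient in $A_j$ I swap the one of smaller value to $i$ against the repeated item, and a two-line computation gives condition (ii); and if both agents discarded a single item twice, the natural swap makes the two post-swap bundles identical, giving (ii) with equality.

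The heart of the argument is the generic case where $a_i,b_i,a_j,b_j$ are four distinct items. Here $v_i(A_i)-v_i(A_j)=\alpha+\beta$ with $\alpha=v_i(a_j,\lceil T/n\rceil)-v_i(a_i,\lceil T/n\rceil)$ and $\beta=v_i(b_j,\lceil T/n\rceil)-v_i(b_i,\lceil T/n\rceil)$, and the key point is that exactly one of the two phases supplies a sign guarantee for the ordered pair $(i,j)$: if $i<j$ then agent $i$ precedes $j$ in the forward phase and its minimizing choice forces $\alpha\geq 0$, while if $i>j$ then agent $i$ precedes $j$ in the reverse phase and forces $\beta\geq 0$. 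If the uncontrolled term is also nonnegative, condition (i) holds. Otherwise I swap precisely on the uncontrolled pair (for $i<j$, take $p=b_j\in A_i$ and $q=b_i\in A_j$; for $i>j$, the symmetric swap on $a_i,a_j$), whereupon condition (ii) reduces to $\alpha\geq\beta$ (respectively $\beta\geq\alpha$), which holds because the controlled term is nonnegative while the uncontrolled one is negative.

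I expect the main obstacle to be bookkeeping rather than ideas: verifying that each item used in a swap is actually present in the right bundle with the right multiplicity, so that the marginal copies involved are indeed the $\lceil T/n\rceil$-th copies, and confirming that the greedy comparisons truly apply, i.e. that $a_j$ is still available when $i$ acts in the forward phase (because $j$ acts later) and $b_j$ is still available when $i$ acts in the reverse phase. I would also treat the degenerate regime $\lceil T/n\rceil=1$ (which occurs when $T=n-2$) up front: there only single discards are possible, so $a_i\neq b_i$ is automatic, zero-count items are excluded from the reverse-phase $\argmin$, and the generic and cross-overlap analyses apply verbatim with $\lfloor T/n\rfloor=0$.
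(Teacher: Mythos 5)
Your proposal is essentially sound and, on the core case, follows the same route as the paper: in the generic situation where $a_i,b_i,a_j,b_j$ are four distinct items, your decomposition $v_i(A_i)-v_i(A_j)=\alpha+\beta$, the observation that the forward phase controls $\alpha$ when $i<j$ and the reverse phase controls $\beta$ when $i>j$, and the swap on the uncontrolled pair reducing condition (ii) to $\alpha\geq\beta$ (resp.\ $\beta\geq\alpha$) is exactly the paper's argument. Where you genuinely diverge is in packaging the remaining cases: your preliminary fact that two bundles differing by a single swap automatically satisfy the \swapEF\ condition for that ordered pair (since failing (i) makes (ii) hold with that very swap) dispatches the whole $T\bmod n=n-1$ case and the cross-coincidence cases $a_i=b_j$ or $b_i=a_j$ without any appeal to the greedy choices, whereas the paper handles these by explicit two-branch valuation comparisons; your version is cleaner and slightly more general. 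You also explicitly treat the possibility that an agent discards two copies of the same item in the two phases, a configuration the paper's trichotomy on $|\{g_i^1,g_i^2\}\cap\{g_j^1,g_j^2\}|$ silently skips, and your handling of $a_i=b_i$ and of $a_i=b_i,\ a_j=b_j$ is correct.

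The one concrete thing to patch: your recipe for repeated discards is keyed to the envious agent ($a_i=b_i$), and it does not literally apply when instead the \emph{envied} agent repeats, i.e.\ $a_j=b_j=q$ while $a_i\neq b_i$ (then $A_j$ is deficient in only one distinct item, which is the repeated one, so ``swap the smaller of the two items deficient in $A_j$ against the repeated item'' has no referent). This sub-case must be argued separately for the ordered pair $(i,j)$, but it yields to the mirrored version of your two-line trick: take $g_i=q$ from $A_i$ and $g_j\in\{a_i,b_i\}$ from $A_j$ chosen with the \emph{larger} value $v_i(\cdot,\lceil T/n\rceil)$; then $v_i(A_i\cup\{g_j\}\setminus\{q\})-v_i(A_j\cup\{q\}\setminus\{g_j\})$ equals the difference between the chosen and the unchosen item's $\lceil T/n\rceil$-th copy values, which is nonnegative by the choice. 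With that addition (and your sensible convention excluding zero-count items from the reverse-phase $\argmin$ when $\lceil T/n\rceil=1$), your case analysis is complete and the proof goes through.
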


\begin{proof}
Let $S$ denote the multiset that contains each item with multiplicity $\lceil T/n \rceil$. We first consider the case where $T\bmod n = n-1$. Let $i$ and $j$ be two agents and denote by $g_i$ and $g_j$ the items that are removed from their bundles in the round-robin phase. If $v_i(g_i,\lceil T/n\rceil)\geq v_i(g_j,\lceil T/n\rceil)$, then 
\begin{align*}
v_i(A_i\cup \{g_i\}\setminus \{g_j\}) &= v_i(S)-v_i(g_j, \lceil T/n \rceil) \geq v_i(S)-v_i(g_i,\lceil T/n \rceil)=v_i(A_j\cup \{g_j\}\setminus \{g_i\}).
\end{align*}
Otherwise, if $v_i(g_i,\lceil T/n\rceil)< v_i(g_j,\lceil T/n\rceil)$, then 
\begin{align*}
v_i(A_i)&=v_i(S)-v_i(g_i,\lceil T/n \rceil) >v_i(S)-v_i(g_j,\lceil T/n \rceil)=v_i(A_j).
\end{align*}

If $T\bmod n=n-2$, the final repeated matching is obtained after the execution of the reverse round-robin phase. Consider two agents $i$ and $j$. Let $g_i^1$ and $g_j^1$ be the items agents $i$ and $j$ remove in the round-robin phase and $g_i^2$ and $g_j^2$ be the items they remove in the reverse round-robin phase, respectively. We distinguish between three cases. If $|\{g_i^1,g_i^2\}\cap \{g_j^1,g_j^2\}|=2$, then $A_i$ and $A_j$ are identical and agent $i$ does not envy agent $j$. If $|\{g_i^1,g_i^2\}\cap \{g_j^1,g_j^2\}|=1$, assume, without loss of generality, that $g_i^1=g_j^2=g$ and observe that $A_i$ has $\lfloor T/n \rfloor$ copies of $g_i^2$ and $\lceil T/n \rceil$ copies of $g_j^1$ and $A_j$ has $\lfloor T/n \rfloor$ copies of $g_j^1$ and $\lceil T/n \rceil$ copies of $g_i^2$. If $v_i(g_i^2,\lceil T/n \rceil) \leq v_i(g_j^1,\lceil T/n \rceil)$, then 
\begin{align*}
v_i(A_i)&=v_i(S\setminus \{g\})-v_i(g_i^2,\lceil T/n\rceil) \geq v_i(S\setminus \{g\})-v_i(g_j^1,\lceil T/n\rceil)=v_i(A_j).
\end{align*}
Otherwise, if $v_i(g_i^2,\lceil T/n \rceil) > v_i(g_j^1,\lceil T/n \rceil)$, then 
\begin{align*}
v_i(A_i\cup \{g_i^2\}\setminus \{g_j^1\}) &= v_i(S)-v_i(g,\lceil T/n \rceil )-v_i(g_j^1,\lceil T/n\rceil)\\
&>v_i(S) -v_i(g,\lceil T/n \rceil) -v_i(g_i^2,\lceil T/n \rceil)
= v_i(A_j\cup \{g_j^1\} \setminus \{g_i^2\}).
\end{align*}
So, the \swapEF\ conditions are satisfied in this case.

It remains to consider the case where $g_i^1$, $g_i^2$, $g_j^1$, and $g_j^2$ are distinct. Then, $A_i$ contains $\lfloor T/n\rfloor$ copies of $g_i^1$ and $g_i^2$ and $\lceil T/n \rceil$ copies of $g_j^1$ and $g_j^2$ and $A_j$ contains $\lfloor T/n\rfloor$ copies of $g_j^1$ and $g_j^2$ and $\lceil T/n \rceil$ copies of $g_i^1$ and $g_i^2$. If $i<j$, agent $i$ acts before agent $j$ in the round-robin phase and, hence, $v_i(g_i^1,\lceil T/n\rceil)\leq v_i(g_j^1,\lceil T/n\rceil)$. If $v_i(g_i^2,\lceil T/n\rceil)\leq v_i(g_j^2,\lceil T/n\rceil)$, then \begin{align*}v_i(A_i)&=v_i(S)-v_i(g_i^1,\lceil T/n\rceil)-v_i(g_i^2,\lceil T/n\rceil)
\geq v_i(S)-v_i(g_j^1,\lceil T/n\rceil)-v_i(g_j^2,\lceil T/n\rceil)=v_i(A_j),
\end{align*}
and agent $i$ does not envy agent $j$. Otherwise, if $v_i(g_i^2,\lceil T/n\rceil)> v_i(g_j^2,\lceil T/n\rceil)$, then 
\begin{align*}
v_i(A_i\cup \{g_i^2\}\setminus \{g_j^2\}) &=v_i(S)-v_i(g_i^1,\lceil T/n\rceil)-v_i(g_j^2,\lceil T/n\rceil)\\
&>v_i(S)-v_i(g_j^1,\lceil T/n\rceil)-v_i(g_i^2,\lceil T/n\rceil) =v_i(A_i\cup \{g_j^2\}\setminus \{g_i^2\}),
\end{align*}
and the \swapEF\ condition is satisfied. If $i>j$, agent $i$ acts before $j$ in the reverse round-robin phase and, hence, $v_i(g_i^2,\lceil T/n\rceil)\leq v_i(g_j^2,\lceil T/n\rceil)$. If $v_i(g_i^1,\lceil T/n\rceil)\leq v_i(g_j^1,\lceil T/n\rceil)$, then \begin{align*}
v_i(A_i)&=v_i(S)-v_i(g_i^1,\lceil T/n\rceil)-v_i(g_i^2,\lceil T/n\rceil) \geq v_i(S)-v_i(g_j^1,\lceil T/n\rceil)-v_i(g_j^2,\lceil T/n\rceil)=v_i(A_j),
\end{align*}
and agent $i$ does not envy agent $j$. Otherwise, if $v_i(g_i^1,\lceil T/n\rceil)> v_i(g_j^1,\lceil T/n\rceil)$, then 
\begin{align*}
v_i(A_i\cup \{g_i^1\}\setminus \{g_j^1\})
&=v_i(S)-v_i(g_i^2,\lceil T/n\rceil)-v_i(g_j^1,\lceil T/n\rceil)\\
&>v_i(S)-v_i(g_j^2,\lceil T/n\rceil)-v_i(g_i^1,\lceil T/n\rceil)=v_i(A_i\cup \{g_j^1\}\setminus \{g_i^1\}),
\end{align*}
and the \swapEF\ condition is again satisfied.
\end{proof}

Theorem~\ref{thm:swapEF} implies the following corollary.

\begin{corollary}
In any repeated matching instance with mixed items and up to five agents/items, a \swapEF\ repeated matching always exists.
\end{corollary}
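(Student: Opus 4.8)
The plan is to derive this statement directly from Theorem~\ref{thm:swapEF}, which already guarantees a \swapEF\ repeated matching whenever $T\bmod n\in\{0,1,2,n-2,n-1\}$. So it suffices to argue that, for every instance with $n\leq 5$ agents/items, the residue $T\bmod n$ automatically falls into this set, no matter how large the number of rounds $T$ is. Since $T\bmod n$ can take any value in $\{0,1,\dots,n-1\}$, the corollary reduces to the purely arithmetic claim that $\{0,1,\dots,n-1\}\subseteq\{0,1,2,n-2,n-1\}$ for all $n\leq 5$.

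The key step is to check when this inclusion holds. The residues in $\{0,1,\dots,n-1\}$ that are \emph{not} covered by $\{0,1,2,n-2,n-1\}$ are exactly those in $\{3,4,\dots,n-3\}$, and this range is empty precisely when $n-3<3$, i.e., when $n\leq 5$. Thus the inclusion holds for all $n\leq 5$ and fails first at $n=6$, where the residue $3$ is left uncovered; this explains why five agents is the natural threshold for the corollary.

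To keep the argument self-contained I would then verify the small cases explicitly. For $n\in\{1,2,3\}$ every residue already lies in $\{0,1,2\}$; for $n=4$ we have $n-2=2$ and $n-1=3$, so $\{0,1,2,3\}=\{0,1,2,n-2,n-1\}$; and for $n=5$ we have $n-2=3$ and $n-1=4$, so again $\{0,1,2,3,4\}=\{0,1,2,n-2,n-1\}$. In each case $T\bmod n$ belongs to the admissible set, Theorem~\ref{thm:swapEF} applies, and the desired \swapEF\ repeated matching exists.

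I do not expect any genuine obstacle beyond this elementary case analysis: the entire content of the corollary is the coincidence, for small $n$, of the ``prefix'' residues $\{0,1,2\}$ and the ``suffix'' residues $\{n-2,n-1\}$ with the full residue set $\{0,\dots,n-1\}$. All of the substantive algorithmic work has already been done in Algorithms~\ref{alg:ef1-1} and~\ref{alg:swapEF} together with Lemmas~\ref{lem:alg1-swapEF} and~\ref{lem:alg2-swapEF}, so the corollary is really just a bookkeeping consequence of Theorem~\ref{thm:swapEF}.
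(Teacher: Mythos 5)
Your proposal is correct and matches the paper's reasoning: the corollary is stated there as an immediate consequence of Theorem~\ref{thm:swapEF}, with exactly the observation you make that for $n\leq 5$ every residue $T\bmod n$ lies in $\{0,1,2,n-2,n-1\}$ (the uncovered residues $\{3,\dots,n-3\}$ being empty precisely when $n\leq 5$). Nothing further is needed.
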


We conclude this section with a comparison of \EFone\ and \swapEF. While the two fairness notions have similar definitions, they are actually incomparable. Clearly, \swapEF\ does not imply \EFone\ as it is trivially satisfied in the simple motivating example with one good and one chore presented at the beginning of this section. However, given that we use largely the same algorithms for \swapEF\ as we did for \EFone, one may believe intuitively that for goods alone, \EFone\ implies \swapEF. This is not the case though. Consider an instance with three rounds and two agents with identical constant valuations $v(1,t)=3$ and $v(2,t)=2$ for two items. Giving item $1$ to one agent and item $2$ to the other for all three rounds is \EFone\ but not \swapEF.

\section{Open Problems}\label{sec:open}
Our work leaves several interesting open problems that deserve investigation. Understanding social welfare maximization is the first one. Is the problem hard for instances with two rounds? Recall that our hardness reduction in the proof of Theorem~\ref{thm:sw-hard} uses three rounds while the problem is in P for a single round. What about approximation algorithms when the items are goods and valuations are not necessarily monotone? Is a constant approximation ratio possible? Regarding fairness, the most important open question is whether \EFone\ repeated matchings exist for any instance with goods. Furthermore, is \EFone\ compatible with different notions of efficiency than the utilitarian social welfare we have used here? For example, what about the egalitarian or Nash social welfare? Is \EFone\  compatible with Pareto-efficiency? For instances with mixed items, do \swapEF\ repeated matchings always exist? Again, how do they interplay with Pareto-efficiency? In general, \swapEF\ deserves investigation in other fair division settings that involve mixed items.

\bibliographystyle{named.bst} 
\bibliography{main.bib}

\newpage
\appendix

\section{Social Welfare Maximization under Monotone Non-Increasing Valuations} \label{sec:swnoninc}

We now give an alternative proof to the next theorem. This is not a new result; we remind the reader that the proof already follows by well-known results on weighted bipartite $b$-matchings (see the discussion in Section~\ref{subsec:welfare-non-increasing}).

\begin{theorem}
Given a repeated instance with monotone non-increasing valuations, a repeated matching of maximum social welfare can be computed in polynomial time.
\end{theorem}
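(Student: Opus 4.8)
The plan is to formulate the social welfare maximization problem as an integer linear program and then argue that the LP relaxation has an integral optimal solution, which can therefore be found in polynomial time by standard LP techniques. First I would introduce, for each agent $i\in\A$, each item $g\in\G$, and each copy index $t\in[T]$, a variable $x_{i,g,t}\in\{0,1\}$ intended to indicate whether agent $i$ receives \emph{at least} $t$ copies of item $g$. The objective is $\max\sum_{i,g,t}v_i(g,t)\,x_{i,g,t}$. The constraints enforce that each agent gets exactly $T$ item copies in total, $\sum_{g,t}x_{i,g,t}=T$ for every $i$; that exactly $T$ copies of each item are distributed, $\sum_{i,t}x_{i,g,t}=T$ for every $g$; together with the ``nesting'' constraints $x_{i,g,t}\geq x_{i,g,t+1}$ so that the $t$-th copy is only counted if the $(t-1)$-th is, and box constraints $0\le x_{i,g,t}\le 1$. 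The monotone non-increasing assumption is what makes this encoding faithful: since $v_i(g,t)\ge v_i(g,t+1)$, an optimal fractional solution will always prefer to ``fill'' the cheaper-indexed (higher-value) copies first, so the nesting constraints can be taken to hold with equality-respecting order and the objective correctly evaluates $v_i(A_i)$.

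Next I would argue integrality of an extreme-point (vertex) solution of this LP relaxation. The natural route is to show the constraint matrix is totally unimodular, or more robustly, to exhibit a direct combinatorial rounding/uncrossing argument. The degree constraints $\sum_{g,t}x_{i,g,t}=T$ and $\sum_{i,t}x_{i,g,t}=T$ have the structure of a transportation/$b$-matching polytope, which is integral; the nesting inequalities $x_{i,g,t}\ge x_{i,g,t+1}$ are the only departure from a pure bipartite degree-constrained system. I would handle these by the consecutive-copies observation already used in the $b$-matching discussion of Section~\ref{subsec:welfare-non-increasing}: given any fractional vertex, one can redistribute fractional mass among the copy indices of a fixed $(i,g)$ pair, and among agents sharing an item, along alternating cycles without decreasing the objective, until the solution becomes integral. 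Showing that such local redistributions preserve feasibility and weight-optimality, and that they terminate at a vertex, establishes that some optimal vertex is integral.

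Then I would translate an integral optimal solution back into a repeated matching. Setting $N(A_i,g)=\sum_{t}x_{i,g,t}$ gives, for each agent, a bundle $A_i$ of exactly $T$ item copies, with exactly $T$ copies of each item allocated overall; by the transformation described in Section~\ref{sec:prelim} (via Hall's theorem on the $T$-regular bipartite multigraph), these frequencies decompose into $T$ matchings forming a valid repeated matching $A$. The value $\sum_{t=1}^{N(A_i,g)}v_i(g,t)$ of this bundle equals the contribution of the $x$ variables to the objective precisely because the non-increasing order of $v_i(g,\cdot)$ matches the order in which copies are counted, so the LP optimum equals $SW(A)$ and $A$ is social-welfare maximizing.

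The main obstacle I anticipate is the integrality argument rather than the modeling. The nesting constraints $x_{i,g,t}\ge x_{i,g,t+1}$ prevent a one-line appeal to the classical integrality of bipartite degree-constrained polytopes, so the delicate part is verifying that an uncrossing/cycle-canceling argument genuinely reaches an integral vertex while keeping both the sharp degree equalities and the nesting inequalities satisfied. I would need to be careful that redistributing fractional mass to repair a nesting violation for one $(i,g)$ pair does not create a fresh violation elsewhere, and that the monotonicity of $v_i$ guarantees no loss of objective value at each step; making this rotation-of-fractional-mass argument airtight, or alternatively proving total unimodularity of the full constraint matrix directly, is where the real work lies.
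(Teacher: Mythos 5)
Your formulation is exactly the one the paper uses in its alternative, LP-based proof: the same variables $x_{i,g,t}$, the same two families of degree constraints summing to $T$, the same nesting constraints $x_{i,g,t}\geq x_{i,g,t+1}$, then solving the relaxation, reading off frequencies $N(A_i,g)=\sum_t x_{i,g,t}$, and decomposing them into $T$ matchings via the $T$-regular bipartite multigraph argument from the preliminaries. The gap is exactly where you admit the real work lies: you never establish that an optimal extreme solution is integral, and neither route you gesture at is carried out. The total-unimodularity route is not what the paper does and is beside the point: the paper never claims the polytope itself is integral; it only shows that an \emph{optimal} extreme point can be assumed integral, and the monotone non-increasing valuations enter essentially in that argument, which a TU statement about the constraint matrix would not capture. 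The paper's argument has two concrete steps that your sketch is missing. First, for each fixed pair $(i,g)$, an optimal extreme solution has at most one fractional variable among $x_{i,g,1},\dots,x_{i,g,T}$: otherwise one shifts $\epsilon$ of mass between the smallest and the largest fractional copy index of that pair, which remains feasible and, precisely because $v_i(g,\cdot)$ is non-increasing, does not decrease the objective, yielding an optimal solution with strictly more integral coordinates and contradicting extremality. Second, build the bipartite graph on $\A\cup\G$ with an edge $(i,g)$ whenever some $x_{i,g,t}$ is fractional; this graph cannot be a forest, since a degree-one node would make its degree constraint $\sum x=T$ contain a single fractional summand, contradicting integrality of $T$. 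So there is a cycle; splitting its edges into two matchings and shifting $\epsilon$ of mass from the lighter to the heavier matching preserves feasibility, does not lower the objective, and creates an additional integral coordinate, again a contradiction. Without these two steps (or an equivalent), your write-up is a plan rather than a proof, and your worry that repairing one nesting violation might create another is resolved only by the specific choice of which fractional indices to perturb.

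Separately, note that the paper's primary route in the main text avoids the LP vertex analysis entirely and would have been a cleaner target for your consecutive-copies idea: construct the bipartite multigraph with $T$ parallel copies of each edge $(i,g)$ weighted $v_i(g,1),\dots,v_i(g,T)$ and compute a maximum-weight $T$-matching in polynomial time; monotonicity gives the consecutive-edge-copies property you already invoke, so the optimal $T$-matching directly encodes a social-welfare-maximizing repeated matching. Either finish the extreme-point argument along the paper's lines or switch to the $b$-matching reduction; as it stands, the integrality step is a genuine gap.
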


\begin{proof}
Consider a repeated matching instance $I=\langle \A, \G, \{v_i\}_{i\in \A},T\rangle$. We express the problem of computing a repeated matching of maximum social welfare as the following integer linear program:
\begin{align*}
\text{max }&\sum_{i\in \A}\sum_{g\in\G}\sum_{t=1}^T{x_{i,g,t}\cdot v_i(g,t)}\\
\text{ s.t.: } &\sum_{g\in \G, t\in [T]}x_{i,g,t}= T, &&\forall i\in \A \\
    &\sum_{i\in \A, t\in [T]}x_{i,g,t}= T, &&\forall g\in \G \\
    &x_{i,g,t}\geq x_{i,g,t+1}, &&\forall i\in\A, g\in G, t\in [T-1]\\
    &x_{i,g,t} \in \{0,1\}, &&\forall i\in\A,\,g\in \G,\,t\in [T] 
\end{align*}
The binary indicator variable $x_{i,g,t}$ denotes whether agent $i$ gets the $t^\text{th}$ copy of item $g$ ($x_{i,g,t}=1$) or not ($x_{i,g,t}=0$). Then, the objective is clearly to maximize the social welfare, the total value the agents get from their copies of items. The first set of constraints requires that each agent gets exactly $T$ copies of the items in the $T$ rounds. The second one requires that each item is assigned in all the $T$ rounds. The third one ensures that an agent can get her $(t+1)^{\text{th}}$ copy of an item only after she gets the $t^{\text{th}}$ copy. 

We now relax the integrality constraint by replacing $x_{i,g,t}\in \{0,1\}$ with $x_{i,g,t}\in [0,1]$. In this way, we get a linear program (LP). Well-known solvers, implementing variants of the {\em ellipsoid method}~\citep{GLS88,S86}, can solve this LP in polynomial time and compute an {\em extreme} solution. Consider such an extreme solution $x$ and, for the sake of contradiction, assume that it is non-integral. 

We first show that for every agent $i\in \A$ and item $g\in\G$, at most one variable $x_{i,g,t}$ can be non-integral. Assume otherwise for agent $i\in\A$ and item $g\in\G$, and let $t_1$ and $t_2$ be the maximum and minimum elements in set $\{t:0<x_{i,g,t}<1\}$. Let $\epsilon=\min\{1-x_{i,g,t_1}, x_{i,g,t_2}\}$ and consider the modified solution $x'$ with $x'_{i,g,t_1}=x_{i,g,t_1}+\epsilon$ and $x'_{i,g,t_2}=x_{i,g,t_2}-\epsilon$, while $x'$ has the same value with $x$ on any triplet different than $(i,g,t_1)$ and $(i,g,t_2)$. Due to the feasibility of $x$, the new solution $x'$ is clearly feasible. Furthermore, the objective value of $x'$ is (at least) as high as that of $x$ as it increases by $\epsilon\cdot v_i(g,t_1)$ and decreases by $\epsilon\cdot v_i(g,t_2)\leq \epsilon\cdot v_i(g,t_1)$. The last inequality follows since the valuations are monotone non-increasing. Hence, the solution $x'$ has optimal objective value as well, and, furthermore, at least one additional integral variable compared to $x$: $x'_{i,g,t_1}=1$ if $1-x_{i,g,t_1}\leq x_{i,g,t_2}$ and $x'_{i,g,t_2}=0$ otherwise. Thus, solution $x$ is not extreme, a contradiction.

Now, consider the bipartite graph $G=(\A,\G,E_x)$, where $E_x$ contains the edge $(i,g)$ if there exists $t$ such that $x_{i,g,t}$ has non-integer value. Observe that $G$ contains cycles. Indeed, if $G$ was a tree, some node $u$ in $\A\cup\G$ would have degree $1$. If $u\in \A$, then $\sum_{g\in \G,t\in [T]}{x_{u,g,t}}$ would include a single non-integer term (i.e., the weight $x_{u,g,t}$ of the single edge which is incident to node $u$). As $T$ is integer, it would then be $\sum_{g\in \G,t\in [T]}{x_{u,g,t}}\not=T$, violating the first LP constraint for agent $u$. If $u\in \G$, then $\sum_{i\in \A,t\in [T]}{x_{i,u,t}}$ would include a single non-integer term. Again, this would imply that $\sum_{i\in \A,t\in [T]}{x_{i,u,t}}\not=T$, violating the second LP constraint for item $u$.

Let $C$ be a cycle in $G$. Since $G$ is bipartite, $C$ has even length and its edges can be partitioned into two matchings $M_1$ and $M_2$. For an edge $(i,g)$ of $E_x$, let $t(i,g)$ be such that $x_{i,g,t(i,g)}$ is non-integer. Also, for a set of edges $M$ of $E_x$, define $V(M)=\sum_{(i,g)\in M}{x_{i,g,t(i,g)}\cdot v_i(g,t(i,g))}$ and, without loss of generality, assume that $V(M_1)\geq V(M_2)$. Observe that $V(M_1)$ and $V(M_2)$ are simply the contribution to the objective value by the triplets $(i,g,t(i,g))$ corresponding to the edges $(i,g)$ of $M_1$ and $M_2$, respectively. Now let \[\epsilon = \min\left\{1-\max_{(i,g)\in M_1}{x_{i,g,t(i,g)}}, \min_{(i,g)\in M_2}{x_{i,g,t(i,g)}}\right\}\] 
and modify the solution $x$ to a new solution $x'$ as follows:
\begin{itemize}
    \item $x'$ has the same value with $x$ on any triplet that does not correspond to $(i,g,t(i,g))$ for an edge $(i,g)$ of $C$.
    \item $x'_{i,g,t(i,g)}=x_{i,g,t(i,g)}+\epsilon$ for every $(i,g)\in M_1$, and
    \item $x'_{i,g,t(i,g)}=x_{i,g,t(i,g)}-\epsilon$ for every $(i,g)\in M_2$.
\end{itemize}

Clearly, the contribution of a triplet that does not correspond to triplet $(i,g,t(i,g))$ for an edge $(i,g)$ of $C$ to the objective value is the same under both solutions $x$ and $x'$. The contribution from the triplets $(i,g,t(i,g))$ corresponding to edges $(i,g)$ of $M_1$ increases by $\epsilon\cdot V(M_1)$ in $x'$ compared to $x$, and the contribution from the triplets $(i,g,t(i,g))$ corresponding to edges $(i,g)$ of $M_2$ decreases by $\epsilon\cdot V(M_2)\leq \epsilon\cdot V(M_1)$. Hence, the objective value of solution $x'$ is also optimal. Furthermore, solution $x'$ has at least one additional integer variable compared to $x$: indeed, observe that $x'_{i_1,g_1,t(i,g)}=1$ for some edge $(i_1,g_1)$ of $M_1$ if $1-\max_{(i,g)\in M_1}{x_{i,g,t(i,g)}}\leq  \min_{(i,g)\in M_2}{x_{i,g,t(i,g)}}$ and $x'_{i_2,g_2,t(i_2,g_2)}=0$ for some edge $(i_2,g_2)$ of $M_2$, otherwise. Thus, solution $x$ is not extreme, again a contradiction.
%
\end{proof}




\end{document}